\colorlet{myPurple}{blue!40!red}
\colorlet{myPurple}{blue!40!red}
\colorlet{myCyan}{cyan!50!gray}
\newcommand{\ba}{\begin{eqnarray}}
\newcommand{\be}{\begin{equation}}
\newcommand{\ee}{\end{equation}}
\newcommand{\bn}{\begin{equation*}}
\newcommand{\en}{\end{equation*}}
\newcommand{\ea}{\end{eqnarray}}
\newcommand{\ban}{\begin{eqnarray*}}
\newcommand{\ean}{\end{eqnarray*}}
\newcommand{\Tr}{\operatorname{tr}}
\newcommand{\tr}{\operatorname{tr}}
\newcommand{\Si}[1]{\mathcal{S}_{#1}}
\newcommand{\sket}[1]{{\ensuremath{\lvert#1\rangle}}}
\newcommand{\lket}[1]{{\ensuremath{\left\lvert#1\right\rangle}}}
\newcommand{\ket}[1]{\if@display\lket{#1}\else\sket{#1}\fi}
\newcommand{\sbra}[1]{{\ensuremath{\langle#1\rvert}}}
\newcommand{\lbra}[1]{{\ensuremath{\left\langle#1\right\rvert}}}
\newcommand{\bra}[1]{\if@display\lbra{#1}\else\sbra{#1}\fi}
\newcommand{\sbraket}[2]{{\ensuremath{\langle#1\rvert#2\rangle}}}
\newcommand{\lbraket}[2]{{\ensuremath{\left\langle#1\!\left\rvert\vphantom{#1}#2\right.\!\right\rangle}}}
\newcommand{\braket}[2]{\if@display\lbraket{#1}{#2}\else\sbraket{#1}{#2}\fi}
\newcommand{\sketbra}[2]{{\ensuremath{\lvert #1\rangle\!\langle #2\rvert}}}
\newcommand{\lketbra}[2]{{\ensuremath{\left\lvert #1\right\rangle\!\!\left\langle #2\right\rvert}}}
\newcommand{\ketbra}[2]{\if@display\lketbra{#1}{#2}\else\sketbra{#1}{#2}\fi}
\newcommand{\proj}[1]{\ketbra{#1}{#1}}
\newcommand{\tp}{\otimes}
\newcommand{\idd}{\openone}
\newcommand{\ie}{{\it{i.e.}~}}
\newcommand{\tx}[1]{\textnormal{#1}}
\newcommand{\rA}{\mathrm{A}}
\newcommand{\ra}{\mathrm{a}}
\newcommand{\rB}{\mathrm{B}}
\newcommand{\rP}{\mathrm{P}}
\newcommand{\ab}{\rA\rB\rP}
\newcommand{\abp}{\rA'\rB'}
\newcommand{\M}{\mathsf{M}}
\newcommand{\N}{\mathsf{N}}
\newcommand{\X}{\mathsf{X}}
\newcommand{\Y}{\mathsf{Y}}
\newcommand{\Z}{\mathsf{Z}}
\newcommand{\D}{\mathsf{D}}
\newcommand{\E}{\mathsf{E}}
\newcommand{\K}{\mathsf{K}}
\newcommand{\Ll}{\mathsf{L}}
\newcommand{\Hh}{\mathsf{H}}
\newcommand{\sz}{\sigma_{\mathsf{Z}}}
\newcommand{\sx}{\sigma_{\mathsf{X}}}
\newcommand{\sy}{\sigma_{\mathsf{Y}}}
\newcommand{\aA}{\mathbf{a}}
\newcommand{\bB}{\mathbf{b}}
\newcommand{\xX}{\mathbf{x}}
\newcommand{\yY}{\mathbf{y}}
\newcommand{\rh}{\rho_{\rA\rB}}
\newcommand{\ivan}[1]{{\color{blue} #1}}
\pgfplotsset{compat=1.14}
\newtheorem{theorem}{Theorem}
\newtheorem{lemma}[theorem]{Lemma}
\newtheorem{cor}{Corollary}
\begin{document}


\title{Device-independent certification of tensor products of quantum states using single-copy self-testing protocols}

\author{Ivan \v{S}upi{\'c}}
\affiliation{D{\'{e}}partement de Physique Appliqu\'{e}e, Universit\'{e} de Gen\`{e}ve, 1211 Gen\`{e}ve, Switzerland}
\author{Daniel Cavalcanti}
\affiliation{ICFO-Institut de Ciencies Fotoniques, The Barcelona Institute of Science and Technology, 08860 Castelldefels (Barcelona), Spain}
\author{Joseph Bowles}
\affiliation{ICFO-Institut de Ciencies Fotoniques, The Barcelona Institute of Science and Technology, 08860 Castelldefels (Barcelona), Spain}

\date{05.03.2021.}

\begin{abstract}
Self-testing protocols are methods to determine the presence of shared entangled states in a device independent scenario, where no assumptions on the measurements involved in the protocol are made. A particular type of self-testing protocol, called parallel self-testing, can certify the presence of copies of a state, however such protocols typically suffer from the problem of requiring a number of measurements that increases with respect to the number of copies one aims to certify. Here we propose a procedure to transform single-copy self-testing protocols into a procedure that certifies the tensor product of an arbitrary number of (not necessarily equal) quantum states, without increasing the number of parties or measurement choices. Moreover, we prove that self-testing protocols that certify a state and rank-one measurements can always be parallelized to certify many copies of the state.  Our results suggest a method to achieve device-independent unbounded randomness expansion with high-dimensional quantum states.
\end{abstract}

\maketitle
\emph{Introduction}
--- Bell nonlocality describes measurement correlations which are rigidly incompatible with the notion of local determinism \cite{bell,bellreview}. Namely, all local deterministic theories satisfy bounds---called Bell inequalities---which limit the strength of the correlations between measurement outcomes of two spatially distant and non-communicating systems. Interestingly, it is possible to violate such Bell inequalities in quantum  experiments \cite{loopholefree1,loopholefree2,loopholefree3}. Such violating correlations, called nonlocal, are closely related to quantum resources such as entanglement and measurement incompatibility,
essential for the development of modern day quantum technologies. 

Bell nonlocality also plays a crucial role in so-called device-independent protocols. It turns out that the violation of a Bell inequality 
is a function of the observed correlations alone, regardless of the underlying physical realization. Thus, the sole observation of a Bell inequality violation witnesses the presence of both entanglement and incompatibility without having knowledge or making any assumptions about the underlying experimental implementation. Such an assumption-free verification is named \textit{device-independent} and has a special significance in cryptographic scenarios \cite{acin2007device,pironio2010random}.


The maximal violation of some Bell inequalities can even imply the precise form of the underlying state and measurements. This can be seen as a device-independent certification, and has received the name of self-testing \cite{STreview,Mayers2004}. The simplest example of such phenomenon is the maximal violation of the Clauser-Horn-Shimony-Holt (Bell) inequality \cite{chsh}, which can be used to self-test the maximally entangled pair of qubits and mutually unbiased local measurements \cite{Popescu1992,Tsirelson1993,Summers1987}.

There exist nonlocal correlations that self-test several copies of a quantum state, a process called parallel self-testing. These self-testing protocols have immediate applications in situations where high amounts of entanglement is needed, such as randomness expansion \cite{Coudron:2014:IRE:2591796.2591873}, parallel quantum key distribution \cite{parallelQKD,parallelQKD2}, delegated quantum computing \cite{ruv,leash}, and universal entanglement certification \cite{Bowles2018a}. One drawback of the first parallel self-testing protocols is that they require a number of local measurements that increases exponentially with the number of copies one wants to certify \cite{Wu2016,Coladangelo2017a,Coudron2016,McKague2017a,BSCA}. This fact increases the time-cost and the randomness consumption of the protocol (relevant for cryptographic applications). More recently, techniques to reduce the number of local measurements to poly(log(n)) and log(n) were found \cite{Natarajan,Natarajan2018,Chao2016,Ostrev2016,BSCA}.
There exist protocols to self-test entangled states of arbitrary dimension with a constant number of three or four inputs per party \cite{Yang,Coladangelo2017b}. These protocols, when applied to copies of quantum states require making joint measurements between the local subsystems of each copy, making them challenging from an experimental perspective.

In this manuscript, we show a procedure to combine different self-testing protocols into a protocol that self tests tensor products of quantum states, without increasing the number of required measurements. The combined protocol has the advantage of not requiring joint measurements among the copies. As a key application, we show a way of self-testing $n$ copies of the two-qubit maximally entangled state using only two measurements per party (see Figure \ref{ParST} c). This is the first self-testing protocol using the minimum number of local measurements possible for certifying an unbounded amount of entanglement. This procedure can therefore be used to convert one random bit into an arbitrary number of private random bits.

\begin{figure*}
    \centering
    \includegraphics[width=1.6\columnwidth]{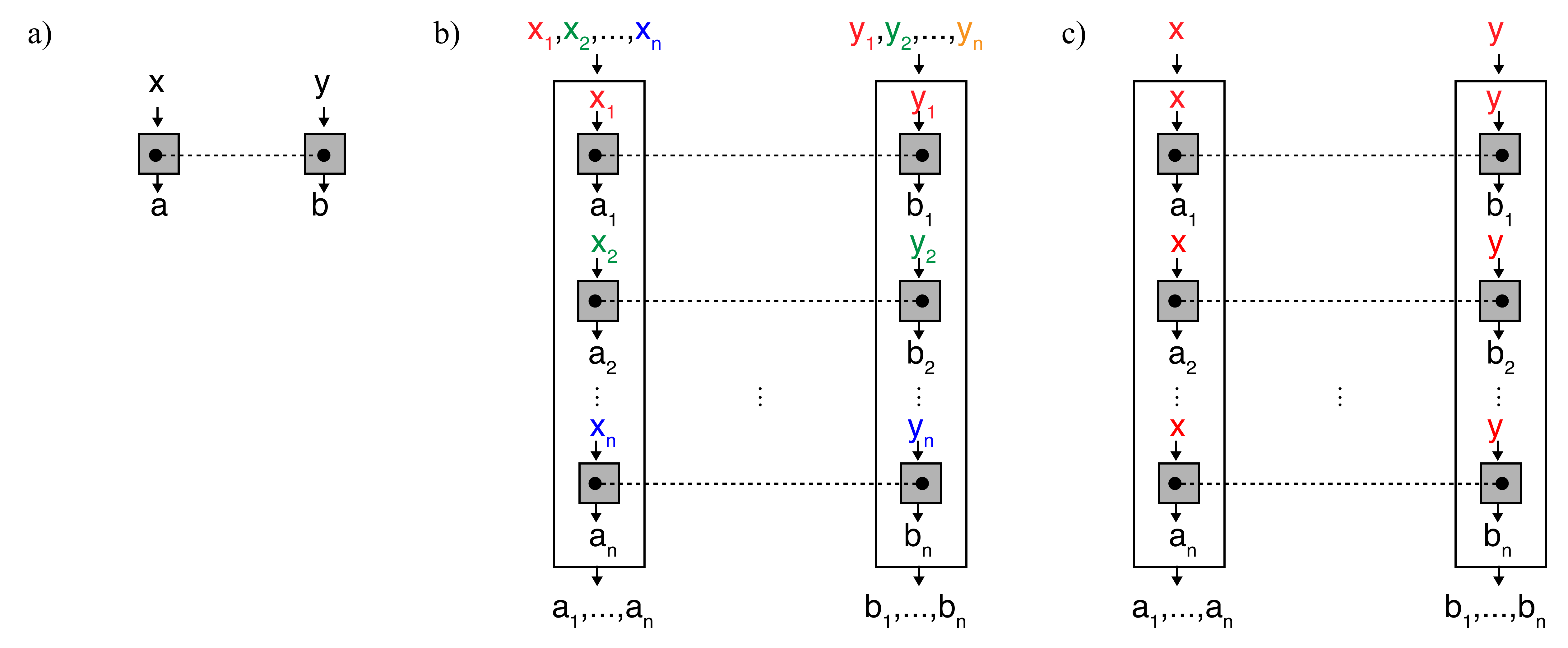}
    \caption{a) Standard Bell scenario, where local measurements labeled by $x$ and $y$ with outcomes $a$ and $b$ are performed on a shared state. b) Standard parallel self-testing protocol certifying $n$ copies of a quantum states. The measurement choice corresponds to a string of $n$ elements, each of them used to determine the measurement applied to each copy.  Thus, the total number of local measurements increases exponentially with the number of copies one aims at certifying. c) The procedure we propose here (see Theorem 1) transforms a self-testing protocol for a single quantum state into a protocol for an arbitrary number of copies. Each party applies the same measurement to each copy and outputs each individual outcome.}
    \label{ParST}
\end{figure*}

\emph{Main idea of the method}-- For simplicity, we explain how our method can transform the self-testing based on the Clauser-Horn-Shimony-Holt (CHSH) game into a self-test for $n$ copies of a two-qubit maximally entangled state. The scenario consists of two space-like separated parties, Alice and Bob, making local measurements on a shared quantum system. Alice and Bob apply one of two measurements each, labeled  $x=0,1$ and $y=0,1$ respectively, and their goal is to obtain outputs $a=0,1$ and $b=0,1$ such that $a \oplus b = x\cdot y$ where $\oplus$ is addition modulo 2. Their score is defined as $\omega=\frac{1}{4}\sum_{x,y}P(a\oplus b=x\cdot y|x,y)$, \ie the probability of satisfying the winning condition averaged over a uniform choice of $x,y$. Alice and Bob can use quantum resources to increase their score. In that case the inputs denote the measurement choices, while the outputs are simply measurement results. When they share quantum state $\rho_{\rA\rB}$, Alice performs measurements $\{\M_{a|x}\}_{a,x}$, and Bob $\{\N_{b|y}\}_{b,y}$ the probabilities $p(a,b|x,y)$ satisfy the Born rule $p(a,b|x,y) = \tr\left[\M_{a|x}\otimes\N_{b|y}\rho_{\rA\rB}\right]$. The measurement observables are defined as $\rA_x = \M_{0|x}-\M_{1|x}$ and $\rB_y = \N_{0|y}-\N_{1|y}$. If Alice and Bob make use of classical resources (or separable states), the best score they can achieve is $\omega = \frac{3}{4}$, which is equivalent to satisfying the CHSH Bell inequality
\begin{align}
    \langle\rA_0\rB_0 \rangle+\langle\rA_0\rB_1 \rangle+\langle\rA_1\rB_0 \rangle-\langle\rA_1\rB_1 \rangle\leq 2,
\end{align}
where $ \langle\rA_x\rB_y \rangle=\sum_{ab}(-1)^{a+b}P(a,b\vert x, y)$. 
On the other hand, if they share a maximally entangled pair of qubits $\ket{\phi^+}=\frac{1}{\sqrt{2}}[\ket{00}+\ket{11}]$, Alice's measurements correspond to the Pauli observables $\rA_0 = \sigma_{\mathsf{Z}}$ and $\rA_1 = \sigma_{\mathsf{X}}$ and Bob's measurements to $\rB_0 = (\sz + \sx)/\sqrt{2} \equiv \sigma_+$ and $\rB_1 = (\sz - \sx)/\sqrt{2}\equiv \sigma_-$, they can achieve the score $\omega_q  \approx 0.8536$. This strategy  violates the CHSH Bell inequality to a value of $2\sqrt{2}$, and is the largest violation possible in quantum theory. This maximum value is also known to self-test the state $\ket{\phi^+}$. Thus, up to a possible local change of basis and extra unused degrees of freedom, the state $\ket{\phi^+}$ is the only state that achieves this value. 

The CHSH inequality can also be used to self-test $n$ copies of $\ket{\phi^+}^{\otimes n}$ via parallel self-testing \cite{Coladangelo2017a,McKague2017a}. For that purpose Alice and Bob in each round receive $n$-bit inputs $\mathbf{x} = (x_1,x_2,\cdots,x_n)$ and $\mathbf{y} = (y_1,y_2,\cdots,y_n)$, and return $n$-bit outputs  $\mathbf{a} = (a_1,a_2,\cdots,a_n)$ and $\mathbf{b} = (b_1,b_2,\cdots,b_n)$. To self-test the state, Alice measures $\sz$ or $\sx$ on her local qubit $i$ depending on $x_i=0,1$, and Bob does similarly, measuring $\sigma_+$ or $\sigma_-$ depending on $y_i=0,1$ and returning outcome $b_i$ (see Fig.\ \ref{ParST}b). The resulting correlations maximally violate $n$ CHSH inequalities in parallel and imply Alice and Bob share a tensor product of $n$ maximally entangled pairs of qubits. 

We consider, in turn, a scenario in which each party has only two choices of measurements $x=0,1$ and $y=0,1$ (Fig.\ \ref{ParST}c). If $x=0$ Alice measures all of her qubits in the $\sz$-basis, whereas if $x=1$ she measures all of them in the $\sx$-basis. In both cases she returns an output consisting of $n$ bits $\mathbf{a} = (a_1,a_2,\cdots,a_n)$ corresponding to the outcomes of each of the $n$ measurements. Bob proceeds similarly, measuring all his qubits in $\sigma_+$ and $\sigma_-$ bases and returning $\mathbf{b} = (b_1,b_2,\cdots,b_n)$. For any pair $(a_i,b_i)$, the condition $P(a_i\oplus b_i=x\cdot y) = \omega_q$ is satisfied, and one might naively think that this information alone is enough to conclude that the state is $\ket{\phi^+}^{\otimes n}$. This would be a mistake however, as the following counter-example shows. Suppose Alice and Bob measure a single copy of $\ket{\phi^+}$, obtaining outputs $a$, $b$ such that $P(a\oplus b=x\cdot y) = \omega_q$. Then, they set $a_i=a$ and $b_i=b$ for all $i$, leading to  $P(a_i\oplus b_i=x\cdot y) = \omega_q$ for all $i$. Thus, one can achieve $P(a_i\oplus b_i=x\cdot y) = \omega_q$ for all $i$ with only a single copy of $\ket{\phi^+}$. 

One thus needs to consider more information than just the CHSH score of each copy. One possibility is to consider the marginal statistics, since in the parallel $n$-copy strategy one has $p(\mathbf{a}\vert x)=1/2^n$ whereas in the single-copy strategy the local output bits $a_i$ are always perfectly correlated (with a similar situation for Bob). However this will not work, since by using $n$ bits of pre-shared randomness $(\lambda_1,\cdots,\lambda_n$) and post-processing their outputs according to $a_i=a+\lambda_i$ and $b_i= b+\lambda_i$, the local output bits of the single-copy strategy can be decorrelated without affecting the CHSH scores. Notice however, that there is still a crucial difference between the two strategies that remains: in each round of the single-copy strategy, if the first pair satisfies $a_1+b_1=x\cdot y$, then all other pairs also satisfy $a_i+b_i=x\cdot y$. This is not the case for the $n$-copy strategy, where each pair has a probability $\omega_q$ to satisfy the condition in each round, independent of the other pairs. It is precisely this difference that we use as an inspiration to design our self-testing protocol.  

\emph{The main result}-- Before stating our main result, let us define self-testing in a precise way. Consider a general bipartite Bell scenario, where $x,y \in \{0,1,\cdots , m-1\}$  and ${a},{b} \in \{0,1, \cdots, o-1\}$.  Alice and Bob share the state $\rho_{\rA\rB}=\tr_\rP\left[\ket{\psi}
\bra{\psi}_{\rA\rB \rP}\right]$, with $\ket{\psi}_{\rA\rB \rP}$ being any purification of $\rho_{\rA\rB}$.

The probabilities of obtaining outputs $a$ and $b$, when the inputs are $x$ and $y$, respectively, are given by
\begin{align}
    p(a,b \vert x,y) = \bra{\psi}\M_{{a}|x}\tp \N_{{b}|y} \tp \openone_\rP  \ket{\psi},
\end{align}
where $\{\M_{{a}|x}\}$ and $\{\N_{{b}|y}\}$ are Alice's and Bob's local measurement operators. We say that the probabilities $\{p(a,b \vert x,y)\}$ self-test the reference experiment $\{\ket{\psi'}_{\rA'\rB'},\M'_{{a}|x},\N'_{{b}|y}\}$ if observing them in an experiment implies the existence of a local unitary transformation $U=U_{\rA\rA'}\tp V_{\rB\rB'}\tp\openone_\rP$ such that 
\begin{align}\label{eq:oneselftest} \nonumber
    &U\left[\ket{\psi}_{\rA\rB\rP}\tp\ket{00}_{\rA'\rB'}\right]  = \ket{\xi}_{\rA\rB\rP}\tp\ket{\psi'}_{\rA'\rB'}; \\ 
    &U\left[{\M}_{{a}|x}\tp{\N}_{{b}|y}\tp\openone_\rP\ket{\psi}_{\rA\rB\rP}\tp\ket{00}_{\rA'\rB'}\right]  = \\ &\qquad \qquad \qquad = \ket{\xi}_{\rA\rB\rP}\tp\left({\M}'_{{a}|x}\tp{\N'}_{{b}|y}\ket{\psi'}_{\rA'\rB'}\right)\nonumber.
\end{align}
These equations state that, up to ancillary degrees of freedom and local basis transformations, the state $\ket{\psi}$ is equivalent to $\ket{\psi'}$, and the measurements $\M_{{a}|x}$ and $\N_{{b}|y}$ act on $\ket{\psi}$ in the same way as $\M'_{{a}|x}$ and $\N'_{{b}|y}$ act on $\ket{\psi'}$. The state $\ket{\xi}$ is usually called the junk state and it can be seen as the state of additional degrees of freedom. Finally, let us define a Bell expression $\mathcal{I}$, as a linear combination of experimental probabilities $\mathcal{I} = \sum_{abxy}b_{a,b}^{x,y}p(a,b|x,y)$, where $b_{a,b}^{x,y} \in \mathbb{R}$. 

Now we can state our main result:

\begin{theorem}\label{theorem1}
Consider a Bell expression $\mathcal{I}$ and $\{p(a,b\vert x,y)\}$ for a scenario where Alice and Bob have $m$ inputs and $o$ outputs each, such that the value $\mathcal{I}(\{p(a,b\vert x,y)\})=\beta$ self-tests the reference experiment $R=\{\ket{\psi'},\M'_{a\vert x},\N'_{b\vert y}\}$, where $\M'_{a\vert x},\N'_{b\vert y}$ are rank-one projectors and $p(a,b\vert x,y)>0~\forall~a,b$ for each combination $(x,y)$ appearing in $\mathcal{I}$. 
Consider the scenario of Fig. 1b with $m$ inputs and $o^n$ outputs per party. Then, there exists a collection of $n$ non-linear Bell expressions $\mathcal{J}^{i}$ ($i=1,...,n$) for this scenario that self-test the reference experiment $R_n=\{\ket{\psi'}^{\tp n},\otimes_{i=1}^n\M'_{a_i\vert x},\otimes_{i=1}^n\N'_{b_i\vert y}\}$.
\end{theorem}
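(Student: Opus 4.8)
The plan is to use the given single-copy self-test as a black box, applied one copy at a time, and to exploit the hypothesis that the reference measurements are rank-one projectors: once the outcomes of the first few copies are read off, the remaining subsystems are left in a product state carrying a genuine self-testing problem for one fewer copy.

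\emph{Construction of the $\mathcal{J}^i$.}--- Given $n$-copy correlations $\{p(\mathbf{a},\mathbf{b}\vert x,y)\}$ with $\mathbf{a}=(a_1,\dots,a_n)$, $\mathbf{b}=(b_1,\dots,b_n)$, write $\mathbf{a}_{<i}=(a_1,\dots,a_{i-1})$, similarly $\mathbf{b}_{<i}$, and for each pair of strings $(\mathbf{a}_{<i},\mathbf{b}_{<i})$ define the conditional single-copy distribution
\begin{align}
q^{i}_{\mathbf{a}_{<i},\mathbf{b}_{<i}}(a,b\vert x,y):=\frac{p(\mathbf{a}_{<i},a,\mathbf{b}_{<i},b\vert x,y)}{p(\mathbf{a}_{<i},\mathbf{b}_{<i}\vert x,y)},
\end{align}
where numerator and denominator are the indicated marginals of $p$ over the copies $>i$ (the denominator also over $a,b$). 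Fixing one input pair $(x_0,y_0)$ occurring in $\mathcal{I}$, set
\begin{align}
\mathcal{J}^{i}(\{p\}):=\sum_{\mathbf{a}_{<i},\mathbf{b}_{<i}} p(\mathbf{a}_{<i},\mathbf{b}_{<i}\vert x_0,y_0)\,\mathcal{I}\!\left(q^{i}_{\mathbf{a}_{<i},\mathbf{b}_{<i}}\right).
\end{align}
Each $\mathcal{J}^{i}$ is a ratio of polynomials in the entries of $\{p\}$, hence a nonlinear Bell expression, and weighting by a probability that does \emph{not} depend on $(x,y)$ is precisely what stops the conditioning from cancelling and $\mathcal{J}^i$ from collapsing to a linear functional of the copy-$i$ marginal. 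Its target value is $\beta$ for every $i$: on $R_n$ the copies are independent and $\M'_{a\vert x},\N'_{b\vert y}$ are rank one, so reading $(\mathbf{a}_{<i},\mathbf{b}_{<i})$ collapses copies $1,\dots,i-1$ to a fixed product pure state and leaves copies $i,\dots,n$ exactly in $\ket{\psi'}^{\tp(n-i+1)}$; hence every $q^{i}_{\mathbf{a}_{<i},\mathbf{b}_{<i}}$ equals the reference single-copy distribution $\{p'(a,b\vert x,y)\}$, for which $\mathcal{I}=\beta$, and since the weights sum to one $\mathcal{J}^{i}(R_n)=\beta$ (and $p'(a,b\vert x,y)>0$ makes every conditioning event positive, so the $\mathcal{J}^i$ are well defined on $R_n$).

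\emph{The self-testing direction.}--- Suppose $\{\ket{\psi}_{\rA\rB\rP},\M_{\mathbf{a}\vert x},\N_{\mathbf{b}\vert y}\}$ reproduces correlations with $\mathcal{J}^{i}=\beta$ for all $i$. I would prove by induction on $k=0,1,\dots,n$ the statement $H_k$: up to a local unitary, $\ket{\psi}\mapsto\big(\tp_{j=1}^{k}\ket{\psi'}_{\rA'_j\rB'_j}\big)\tp\ket{\psi^{(k)}_{\mathrm{res}}}$ and the full POVM acts on the rotated state as $\M_{\mathbf{a}\vert x}\simeq\big(\tp_{j=1}^{k}\M'_{a_j\vert x}\big)\tp\M^{(k)}_{\mathrm{res},\mathbf{a}_{>k}\vert x}$ for some POVM $\{\M^{(k)}_{\mathrm{res},\mathbf{a}_{>k}\vert x}\}$ on the residual space (and likewise for Bob). $H_0$ is trivial and $H_n$ is exactly the theorem. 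For the step $H_k\Rightarrow H_{k+1}$: because under $H_k$ the first $k$ copies are honestly and independently measured, conditioning on $(\mathbf{a}_{\le k},\mathbf{b}_{\le k})$ leaves the residual untouched, so $q^{k+1}_{\mathbf{a}_{\le k},\mathbf{b}_{\le k}}$ is \emph{independent of the conditioning string} and equals the first-copy marginal of the residual $(n-k)$-copy experiment; hence $\mathcal{J}^{k+1}=\beta$ reduces to $\mathcal{I}=\beta$ for that marginal, a bona fide quantum distribution, so the single-copy self-test applies to the residual experiment, extracting one more $\ket{\psi'}_{\rA'_{k+1}\rB'_{k+1}}$ and pinning down the action of the \emph{coarse-grained} residual measurement $\sum_{\mathbf{a}_{>k+1}}\M^{(k)}_{\mathrm{res},(a,\mathbf{a}_{>k+1})\vert x}\simeq\M'_{a\vert x}\tp\idd$. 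Upgrading this to the full residual POVM and composing the new unitary with the accumulated one yields $H_{k+1}$.

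\emph{Main obstacle.}--- The crux is this last upgrade: passing from the self-test of the \emph{coarse-grained} measurement to the factorisation $\M_{\mathbf{a}\vert x}\simeq\M'_{a_{k+1}\vert x}\tp\M^{(k+1)}_{\mathrm{res},\mathbf{a}_{>k+1}\vert x}$. The self-test only fixes the measurement up to the isometry on the support of the reduced state, so one must first argue --- this is where the hypothesis $p(a,b\vert x,y)>0~\forall~a,b$ is used --- that the reduced state has enough support, and then invoke rank-one-ness of $\M'_{a_{k+1}\vert x}$, the only positive operators dominated by a rank-one projector being its nonnegative multiples, to conclude that each fine POVM element $\M^{(k)}_{\mathrm{res},(a,\mathbf{a}_{>k+1})\vert x}$ has the product form $\M'_{a\vert x}\tp F_{\mathbf{a}_{>k+1}\vert x}$ with $\{F_{\mathbf{a}_{>k+1}\vert x}\}$ a POVM on the new residual. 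Everything else --- checking well-definedness of the conditional distributions along the induction, and assembling the per-copy residual unitaries into one local unitary of the form of Eq.~\eqref{eq:oneselftest} --- is routine once this factorisation is in hand.
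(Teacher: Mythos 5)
There is a genuine gap in the inductive step, and it sits exactly where your hypothesis $H_k$ is stronger than what your argument delivers. You state $H_k$ with a residual POVM $\M^{(k)}_{\mathrm{res},\mathbf{a}_{>k}\vert x}$ indexed only by $\mathbf{a}_{>k}$, i.e.\ independent of the prefix $\mathbf{a}_{\le k}$, and you use this to claim that $q^{k+1}_{\mathbf{a}_{\le k},\mathbf{b}_{\le k}}$ does not depend on the conditioning string, so that $\mathcal{J}^{k+1}=\beta$ collapses to a single instance of $\mathcal{I}=\beta$. But the step that is supposed to produce $H_{k+1}$ does not yield this prefix-independence: the rank-one domination argument only gives $\M^{(k)}_{\mathrm{res},(a,\mathbf{a}_{>k+1})\vert x}=\M'_{a\vert x}\tp F_{a,\mathbf{a}_{>k+1}\vert x}$, where $F$ may still depend on $a=a_{k+1}$ (for each fixed $a$ the family $\{F_{a,\mathbf{a}_{>k+1}\vert x}\}_{\mathbf{a}_{>k+1}}$ is a POVM, but nothing identifies these POVMs across different $a$). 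This dependence is not a technicality: a realization in which Alice measures copy $k{+}1$ honestly and then chooses her measurement of copy $k{+}2$ as a function of $a_{k+1}$ (e.g.\ flips the outcome when $a_{k+1}=1$) satisfies all conditions up to level $k{+}1$ yet has conditional distributions $q^{k+2}$ that genuinely depend on $(a_{k+1},b_{k+1})$; it is only the \emph{next} condition $\mathcal{J}^{k+2}=\beta$ that rules it out. So assuming prefix-independence as part of $H_{k+1}$ in order to evaluate $\mathcal{J}^{k+2}$ is circular, and already $H_1$ in your strong form does not follow from $\mathcal{J}^1=\beta$.

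The repair is to weaken the inductive hypothesis to what is actually provable --- the \emph{coarse-grained} elements $\M_{\mathbf{a}_{\le k}\vert x}$ map to $\idd\tp\bigotimes_{j\le k}\M'_{a_j\vert x}$, while the one-step-finer elements map to $\K_{\mathbf{a}_{\le k},a_{k+1}\vert x}\tp\M'_{\mathbf{a}_{\le k}\vert x}$ with prefix-dependent $\K$ --- and then handle the resulting prefix-dependent conditional Bell values. The paper does this by averaging the operators, $\K^{(k+1)}_{a\vert x}\propto\sum_{\mathbf{a}_{\le k}}\K_{\mathbf{a}_{\le k},a\vert x}$ (the uniform weights factorize between Alice and Bob, so the average of the conditional values is the Bell value of a single pair of averaged POVMs on the junk state), self-testing that single averaged experiment, and only afterwards showing that each individual $\K_{\mathbf{a}_{\le k},a\vert x}$ coincides with the average (the $\bar{\K}=\idd$ step). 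Your probability weights make this harder, since $\prod_j p'(a_j,b_j\vert x_0,y_0)$ does not factorize into an Alice part times a Bob part, so the weighted average is not the Bell value of one pair of POVMs; you could instead argue that each conditional value is $\le\beta$ (each is a bona fide quantum behaviour) and the strictly-positive-weight average equals $\beta$, but then you must still reconcile the a priori different self-testing unitaries obtained for different prefixes. A smaller point: you attribute the operator-level identities to the hypothesis $p(a,b\vert x,y)>0$, but that hypothesis is what makes the conditional expressions well defined; promoting "action on the state" to an operator identity needs a separate full-rank assumption on the reduced reference state, which the paper states explicitly.
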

The nonlinear Bell expressions $\mathcal{J}^{i}$ in Theorem \ref{theorem1} are constructed as follows. Define  $\mathcal{I}^{1}=\mathcal{I}(\{p(a_1,b_1\vert x,y)\})$ and the conditional Bell expressions for the pair $i>1$ as 
\begin{align}\label{Conditional}
    \mathcal{I}_{\mathbf{a}_{i-1}\mathbf{b}_{i-1}}^{i}=\mathcal{I}(\{p(a_{i},b_{i}\vert x,y,\mathbf{a}_{i-1},\mathbf{b}_{i-1}\}),
\end{align}
where $\mathbf{a}_{i-1}=(a_1,a_2,\cdots,a_{i-1})$ and $\mathbf{b}_{i-1}=(b_1,b_2,\cdots,b_{i-1})$.  $\mathcal{I}_{\mathbf{a}_{i-1}\mathbf{b}_{i-1}}^{i}$ gives the value of $\mathcal{I}$ for the pair $i$ conditioned on observing the particular values $\mathbf{a}_{i-1}=(a_1,a_2,\cdots,a_{i-1})$ and $\mathbf{b}_{i-1}=(b_1,b_2,\cdots,b_{i-1})$. Note that in order for these conditional Bell expressions to be well defined we require that $p(\mathbf{a}_{i-1},\mathbf{b}_{i-1}\vert x,y)>0~\forall~\mathbf{a}_{i-1},\mathbf{b}_{i-1}$ for each combination $(x,y)$ appearing in $\mathcal{I}$. This is automatically the case for the reference experiment $R_n$ due to the properties of $R$ given in the statement of the theorem. The Bell expression $\mathcal{J}^{i}$ is defined as 
\begin{align}\label{definitionIJ}
    \mathcal{J}^{i}=\frac{1}{o^{2(i-1)}}\sum_{\mathbf{a}_{i-1}\mathbf{b}_{i-1}}\mathcal{I}_{\mathbf{a}_{i-1}\mathbf{b}_{i-1}}^{i}, 
\end{align}
for $i>1$ and $\mathcal{J}^{1}=\mathcal{I}^{1}$. The observation that $\mathcal{J}^{i}=\beta~\forall i=1,\cdots,n$ self tests $R_n$. The proof is inductive (given a self-test of $k$ copies, the value of $\mathcal{J}^{k+1}$ self-tests an additional copy) and is given in the Appendix \ref{apptheorem}. Here we provide some intuition behind the proof. As shown on the example of the CHSH inequality two ingredients are necessary to make a self-testing statement with the minimal number of inputs. First, all sets of probabilities $p(a_i,b_i|x,y)$ reach the maximum of $\mathcal{I}$, and the pairs of outputs $(a_i,b_i)$ are mutually independent. If $\mathcal{J}^i$ takes value $\beta$ for all $i$, it means that all conditional Bell expressions $\mathcal{I}_{\mathbf{a}_{i-1}\mathbf{b}_{i-1}}^{i}$ are also equal to $\beta$. It can be shown that the maximal values of conditional Bell expressions imply that the outputs corresponding to different $i$-s are mutually independent. In other words, conditioning on $\mathbf{a}_{i-1},\mathbf{b}_{i-1}$ does not affect the values of $a_i$ and $b_i$. Hence, $\beta = \mathcal{I}(p(a_i,b_i|x,y,\mathbf{a}_{i-1},\mathbf{b}_{i-1}) = \mathcal{I}(p(a_i,b_i|x,y)$, for all $i$-s, which means that both conditions for making a self-testing statement are met.

A direct consequence of Theorem \ref{theorem1} is the possibility of self-testing $n$ copies of the two-qubit maximally entangled state $\ket{\phi^+}$ (itself a maximally entangled state of local dimension $2^n$) with only two measurement settings per party via the CHSH Bell inequality. More precisely, we have the following corollary:
\begin{cor}\label{cor:maxent}
Let $\M_{a\vert x},\N_{b\vert y}$ ($x,y \in \{0,1\}, a,b\in\{\pm1\}$) be the local measurements that lead to the maximal violation of the CHSH Bell inequality when applied to $\ket{\phi^+}$. Then the correlations obtained by performing the experiment $R=\{\ket{\phi^+},\M_{a\vert x},\N_{b\vert y}\}$ in the parallel scheme of Fig.\ \ref{ParST}c self-test the reference experiment $\{\ket{\phi^+}^{\tp n},\M_{a\vert x}^{\tp n},\N_{b\vert y}^{\tp n}\}$
\end{cor}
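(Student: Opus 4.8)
The plan is to obtain Corollary~\ref{cor:maxent} as a direct instance of Theorem~\ref{theorem1}: I would verify that the CHSH self-test meets the theorem's hypotheses, and then check that the product correlations generated by the parallel scheme of Fig.~\ref{ParST}c attain the required values of the nonlinear Bell expressions.

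First I would recall the standard CHSH self-test: the maximal quantum value $\beta=2\sqrt2$ of the CHSH expression $\mathcal{I}$ self-tests $R=\{\ket{\phi^+},\M_{a\vert x},\N_{b\vert y}\}$ with Alice's observables $\rA_0=\sz$, $\rA_1=\sx$ and Bob's $\rB_0=\sigma_+$, $\rB_1=\sigma_-$ \cite{Popescu1992,Tsirelson1993,STreview}. Two conditions of Theorem~\ref{theorem1} then remain to be checked for this $R$. (i) The measurement operators are rank-one projectors: $\M_{a\vert 0}$, $\M_{a\vert 1}$ are the one-dimensional eigenprojectors of the single-qubit observables $\sz$, $\sx$, and $\N_{b\vert 0}$, $\N_{b\vert 1}$ those of $\sigma_+$, $\sigma_-$. (ii) The outcome distribution has full support on every $(x,y)$ entering $\mathcal{I}$: a one-line computation gives $p(a,b\vert x,y)=\tfrac14\bigl(1+(-1)^{a+b}\langle\rA_x\rB_y\rangle\bigr)$ with $\lvert\langle\rA_x\rB_y\rangle\rvert=1/\sqrt2$, so $p(a,b\vert x,y)=\tfrac14(1\pm1/\sqrt2)>0$ for all $a,b$.

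With the hypotheses verified, Theorem~\ref{theorem1} provides nonlinear Bell expressions $\mathcal{J}^1,\dots,\mathcal{J}^n$, built from conditional CHSH expressions via \eqref{Conditional}--\eqref{definitionIJ}, whose simultaneous value $\beta$ self-tests $R_n=\{\ket{\phi^+}^{\tp n},\otimes_{i=1}^n\M_{a_i\vert x},\otimes_{i=1}^n\N_{b_i\vert y}\}$; this is precisely the reference experiment $\{\ket{\phi^+}^{\tp n},\M_{a\vert x}^{\tp n},\N_{b\vert y}^{\tp n}\}$ in the statement, where $(\cdot)^{\tp n}$ abbreviates ``apply the single-copy measurement to each of the $n$ copies''. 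It then suffices to show that the parallel scheme reproduces $\mathcal{J}^i=\beta$ for all $i$. Running $R$ on $\ket{\phi^+}^{\tp n}$ with the same setting on every copy yields the product distribution $p(\mathbf{a},\mathbf{b}\vert x,y)=\prod_{i=1}^n p(a_i,b_i\vert x,y)$; hence every conditional factorizes, $p(a_i,b_i\vert x,y,\mathbf{a}_{i-1},\mathbf{b}_{i-1})=p(a_i,b_i\vert x,y)$, so each $\mathcal{I}_{\mathbf{a}_{i-1}\mathbf{b}_{i-1}}^{i}$ equals the single-copy CHSH value $\beta$, and the average in \eqref{definitionIJ} gives $\mathcal{J}^i=\beta$. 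Invoking the self-testing statement of Theorem~\ref{theorem1} on these correlations gives the corollary.

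The argument is essentially bookkeeping, since the substance is carried by Theorem~\ref{theorem1}. The only points that demand a bit of care --- and the closest thing to an obstacle --- are confirming that the CHSH-optimal qubit measurements are genuinely rank-one and full-support (both immediate once written out), and making sure the $(\cdot)^{\tp n}$ notation in the corollary is identified with the componentwise tensor product $\otimes_{i=1}^n\M'_{a_i\vert x}$ appearing in $R_n$ rather than with an $n$-fold power of a single operator.
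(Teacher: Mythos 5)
Your proposal is correct and matches the paper's route exactly: the paper presents Corollary~\ref{cor:maxent} as a direct consequence of Theorem~\ref{theorem1}, and the work you do --- checking that the CHSH-optimal qubit measurements are rank-one projectors with $p(a,b\vert x,y)=\tfrac14(1\pm 1/\sqrt2)>0$, and that the product distribution of the parallel scheme makes every conditional CHSH expression, hence every $\mathcal{J}^i$, equal to $2\sqrt2$ --- is precisely the verification the paper leaves implicit. Nothing further is needed.
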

\noindent In particular, this means that an unbounded amount of entanglement can be certified in a device-independent manner with the minimum number of local measurements possible. 

Although Theorem \ref{theorem1}  holds only for the case of perfect statistics, one can investigate the robustness to noise of Corollary \ref{cor:maxent} for the case $n=2$, via the technique propsed in \cite{PhysRevA.91.022115,PhysRevLett.113.040401}. The precise noise model we consider is one in which each copy of the state is subject to the same level of white noise. That is, the observed correlations are generated using the same measurement strategy on the state $\rho_\nu\tp\rho_\nu$ where
\begin{align}
    \rho_{\nu}=\nu\proj{\phi^+}+(1-\nu)\openone/4.
\end{align}
In \cite{PhysRevA.91.022115,PhysRevLett.113.040401} a semi-definite program is given that calculates a value $f$, such that for any state $\rho_{\rA\rB}$ leading to the observed correlations, there exists a local transformation mapping $\rho_{\rA\rB}$ to a state that has fidelity at least $f$ with the reference state (in this case $\ket{\phi^+}\tp\ket{\phi^+})$. In the case of perfect self-testing \eqref{eq:oneselftest}, one has $f=1$, which then decreases as a function of the noise parameter. Fig.\ \ref{fig:plot} shows the values of $f$ obtained as a function of $\nu$ for this noise model.\\

\begin{figure}
\centering
\includegraphics[width=0.9\columnwidth]{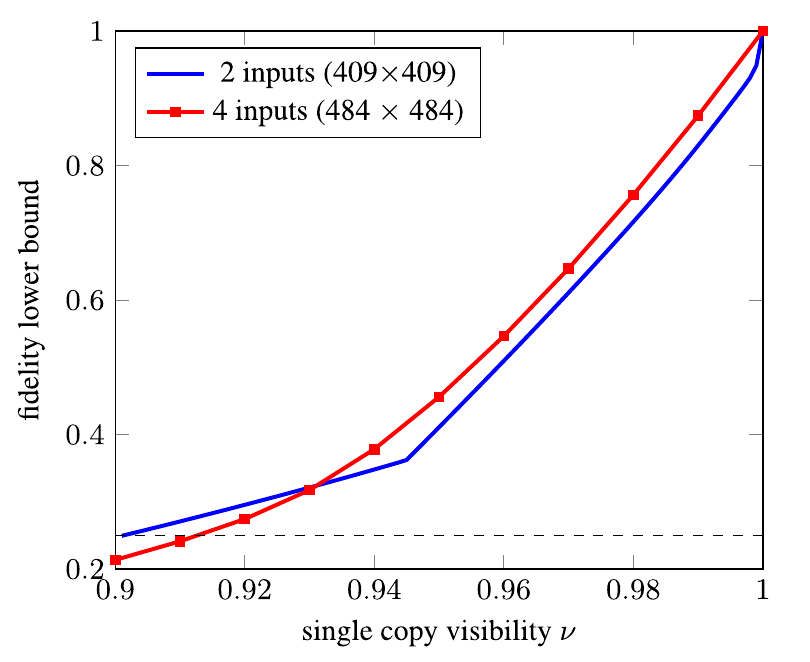}
\caption{\label{fig:plot} Lower bounds to the self-tested fidelity with the state $\ket{\phi^+}\otimes\ket{\phi^+}$ obtained via the method proposed in Refs. \cite{PhysRevA.91.022115,PhysRevLett.113.040401}, using correlations obtained from two copies of the isotropic state $\nu\proj{\phi^+}+(1-\nu)\openone/4$. The blue curve is obtained using two inputs per party, using the parallel scheme of Fig.\ \ref{ParST}c. The red curve is data taken from \cite{Wu2016} using the strategy of Fig.\ \ref{ParST}b with four inputs per party. The legend gives the size of the matrices used in the corresponding SDP optimization. The dashed line corresponds to the trivial fidelity between the target state and the separable state $\ket{00}$ of 0.25. 
}
\end{figure}

\emph{Extensions }-- 
Theorem \ref{theorem1} is defined for bipartite Bell inequalities that feature an equal number of inputs per party. Here we present some more general extensions to this result that either do not require the knowledge of a specific Bell inequality, or allow one to self-test tensor products of different bipartite states. 

\emph{Parallel self-testing protocols from full statistics}---While Theorem \ref{theorem1} refers to self-tests based on the maximal violation of some Bell inequality, it is worth noting that similar claims can be made for self-testing protocols based on the observation of a particular set of correlations. More precisely, we have the following result. 
\begin{theorem}\label{thm:fullstat}
Assume that the correlations $\{p(a,b\vert x,y)\}$ self-test the reference state $\ket{\psi'}$, such that $p(a,b\vert x,y)>0$ for all $a,b,x,y$. Then the correlations $\{p(\mathbf{a},\mathbf{b}\vert x,y)\}$ obtained by performing the reference experiment $n$ times in the parallel scheme of Fig.\ \ref{ParST}c self-test the corresponding tensor product $\ket{\psi'}^{\otimes n}$.
\end{theorem}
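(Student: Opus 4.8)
The plan is to argue by induction on $n$, in the same spirit as the proof of Theorem~\ref{theorem1}, but replacing the conditional Bell expressions $\mathcal{J}^{i}$ by conditional \emph{correlations} — which is legitimate here precisely because the full statistics are assumed to be observed, and where ``self-test'' is understood in the sense of Eq.~\eqref{eq:oneselftest}, i.e.\ including the measurements. The base case $n=1$ is the hypothesis. For the inductive step the key structural fact is that, since the reference experiment is a tensor product, its correlations factorize, $p(\mathbf{a},\mathbf{b}\vert x,y)=\prod_{i=1}^{n}p(a_i,b_i\vert x,y)$. Hence for any experiment $\{\ket{\psi}_{\rA\rB\rP},\M_{\mathbf{a}\vert x},\N_{\mathbf{b}\vert y}\}$ reproducing these correlations one has: (i) the marginal distribution of the outcomes of copies $1,\dots,n-1$ equals the $(n-1)$-copy reference correlations; and (ii) the conditional distribution $p(a_n,b_n\vert x,y,\mathbf{a}_{n-1},\mathbf{b}_{n-1})$ equals $p(a_n,b_n\vert x,y)$ for every $\mathbf{a}_{n-1},\mathbf{b}_{n-1}$ — these conditionals being well defined by the positivity assumption — so the $n$-th pair is statistically independent of the first $n-1$ and carries exactly the single-copy reference statistics.

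First I would feed the coarse-grained experiment $\{\ket{\psi},\,\widetilde{\M}_{\mathbf{a}_{n-1}\vert x}=\sum_{a_n}\M_{\mathbf{a}\vert x},\,\widetilde{\N}_{\mathbf{b}_{n-1}\vert y}=\sum_{b_n}\N_{\mathbf{b}\vert y}\}$, which by (i) reproduces the $(n-1)$-copy reference correlations, into the inductive hypothesis. This yields a local unitary $U_{n-1}=U_{\rA\rA''}\tp V_{\rB\rB''}\tp\openone_\rP$ sending $\ket{\psi}_{\rA\rB\rP}\tp\ket{00}_{\rA''\rB''}$ to $\ket{\xi}_{\rA\rB\rP}\tp\ket{\psi'}^{\tp(n-1)}_{\rA''\rB''}$ and, crucially, mapping $\widetilde{\M}_{\mathbf{a}_{n-1}\vert x}$ to $\openone_{\rA\rP}\tp\bigotimes_{i<n}\M'_{a_i\vert x}$ on that state (and symmetrically for Bob). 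Because this coarse-grained measurement then acts as the identity on $\rA\rB\rP$, obtaining the outcomes $\mathbf{a}_{n-1},\mathbf{b}_{n-1}$ leaves the reduced state on $\rA\rB\rP$ equal to $\ket{\xi}$, independently of $x,y$ and of $\mathbf{a}_{n-1},\mathbf{b}_{n-1}$.

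Next I would extract the $n$-th copy out of $\ket{\xi}$. Since $\M_{\mathbf{a}_{n-1},a_n\vert x}\le\widetilde{\M}_{\mathbf{a}_{n-1}\vert x}$, after conjugation by $U_{n-1}$ each refined operator is supported within the image of $\openone_{\rA\rP}\tp\bigotimes_{i<n}\M'_{a_i\vert x}$, hence factorizes as $R^{x}_{a_n\vert\mathbf{a}_{n-1}}\tp\bigotimes_{i<n}\M'_{a_i\vert x}$ for some POVM $\{R^{x}_{a_n\vert\mathbf{a}_{n-1}}\}_{a_n}$ on $\rA$, and a corresponding POVM $\{S^{y}_{b_n\vert\mathbf{b}_{n-1}}\}_{b_n}$ on $\rB$. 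Evaluating (ii) on $\ket{\xi}$ then shows that $\{\ket{\xi}_{\rA\rB\rP},R^{x}_{a_n\vert\vec 0},S^{y}_{b_n\vert\vec 0}\}$ is a single-copy experiment reproducing the reference correlations $\{p(a,b\vert x,y)\}$; the base case provides a local unitary $W$ extracting a further $\ket{\psi'}$ from $\ket{\xi}$, and the composition $W\circ U_{n-1}$ (up to relabelling of the ancillary registers) is the required local unitary producing $\ket{\psi'}^{\tp n}$ up to a junk state. The independence in (ii), together with the rigidity of the single-copy self-test, further forces $R^{x}_{a_n\vert\mathbf{a}_{n-1}}$ to act as $\M'_{a_n\vert x}$ on the relevant support for \emph{every} $\mathbf{a}_{n-1}$, so the measurement clause of the self-testing statement for $R_n$ is also recovered and the induction closes.

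I expect the main obstacle to be making the ``residual measurement'' step fully rigorous. The POVMs $\widetilde{\M},\widetilde{\N}$ need not be projective, so conditioning on the outcomes of copies $1,\dots,n-1$ does not literally yield a post-measurement state; one must either introduce a Naimark dilation consistently — carried by Alice's and Bob's own registers, never touching the purifying system $\rP$ on which the self-testing unitary acts trivially — or argue purely at the operator level that $\M_{\mathbf{a}_{n-1},a_n\vert x}$ lies in the image of $\bigotimes_{i<n}\M'_{a_i\vert x}$, so that the claimed factorization holds. The subtle point underneath is that, a priori, conditioning on $\mathbf{a}_{n-1},\mathbf{b}_{n-1}$ could leave an $x,y$-dependent residual state — which would not define a legitimate single-copy Bell experiment — and it is exactly the measurement part of the inductive hypothesis, combined with the positivity assumption, that rules this out. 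By contrast, the naive alternative of extracting each copy by an independent local isometry fails because of the usual obstruction to joining incompatible isometries; it is precisely the sequential, conditional structure above (the analogue of the $\mathcal{J}^{i}$ of Theorem~\ref{theorem1}) that circumvents it.
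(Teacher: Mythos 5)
Your proposal is correct and follows essentially the same route as the paper: extract the first $n-1$ copies from the coarse-grained statistics, use the rank-one reference projectors to factor the refined measurement operators into a residual POVM on the junk system tensored with $\bigotimes_{i<n}\M'_{a_i|x}$, verify via the conditional correlations that the residual POVMs reproduce the single-copy statistics on $\ket{\xi}$, and re-apply the single-copy self-test — exactly the structure of Appendices \ref{apptheorem} and \ref{appexample}, with conditional correlations replacing conditional Bell values. The only cosmetic differences are that you condition on a fixed outcome string $\mathbf{a}_{n-1}=\mathbf{b}_{n-1}=\vec{0}$ rather than averaging over all of them (legitimate here, since full statistics pin down every conditional individually) and run the induction top-down rather than bottom-up; like the paper, you implicitly rely on the reference measurements being rank-one projectors for the factorization step and on the completeness-plus-positivity argument to force the residual operators to be independent of the conditioning string.
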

In Appendix \ref{appexample} we give an example of this technique that self-tests tensor products of the maximally entangled two-qubit state based on the correlations presented in \cite{1367-2630-18-2-025021}. 

\emph{Combining self-testing protocols}---The previous results show how it is possible to self-test a tensor product $\ket{\psi'}^{\otimes n}$ of the same state $\ket{\psi'}$, starting from a given protocol to self-test the state $\ket{\psi'}$. In fact, the following describes how our construction can also be used to generate self-testing protocols for a tensor product of different states $\bigotimes_{i}\ket{\psi_i'}$.
\begin{theorem}
Consider a set of states $\{\ket{\psi_i'}\}$, each of which is self-tested through the maximal value of a corresponding Bell expression $\mathcal{I}_i$, where each $\mathcal{I}_i$ features the same number of inputs for each party. Then the correlations $\{p(\mathbf{a},\mathbf{b} \vert x,y)\}$ obtained by performing the reference experiments $n$ times in the parallel scheme of Fig.\ \ref{ParST}c self-test the state $\bigotimes_{i}\ket{\psi_i'}$. 
\end{theorem}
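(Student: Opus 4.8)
The plan is to reduce the third theorem to the machinery already established in Theorem~\ref{theorem1}, treating the heterogeneous case as a straightforward bookkeeping modification of the homogeneous induction. First I would set up the notation: label the $n$ copies $i=1,\dots,n$, where copy $i$ carries the reference experiment $\{\ket{\psi_i'},{\M^{(i)}_{a|x}},{\N^{(i)}_{b|y}}\}$ self-tested by the maximal value $\beta_i$ of the Bell expression $\mathcal{I}_i$. Since all $\mathcal{I}_i$ share the same number of inputs per party (say $m$), the parallel scheme of Fig.~\ref{ParST}c is well defined: on input $x$ Alice measures copy $i$ with $\M^{(i)}_{\cdot|x}$ and reports the full string $\mathbf{a}=(a_1,\dots,a_n)$, similarly for Bob. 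I would then define, exactly as in the discussion following Theorem~\ref{theorem1}, the conditional Bell expressions $\mathcal{I}^{i}_{\mathbf{a}_{i-1}\mathbf{b}_{i-1}}=\mathcal{I}_i(\{p(a_i,b_i\vert x,y,\mathbf{a}_{i-1},\mathbf{b}_{i-1})\})$ and their averages $\mathcal{J}^{i}$ over the $o_1\cdots o_{i-1}$ past-outcome strings (note the per-copy output cardinalities $o_j$ may now differ, so the normalization is $1/(o_1\cdots o_{i-1})^2$ rather than $o^{-2(i-1)}$). The claim to prove is that observing $\mathcal{J}^{i}=\beta_i$ for all $i$ self-tests $\bigotimes_i\ket{\psi_i'}$ together with the product measurements.

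The heart of the argument is an induction identical in structure to the one behind Theorem~\ref{theorem1}. The base case $i=1$ is immediate: $\mathcal{J}^{1}=\mathcal{I}_1(\{p(a_1,b_1|x,y)\})=\beta_1$, and since $\mathcal{I}_1$ self-tests $\ket{\psi_1'}$ with rank-one (or at least the stated) measurements, there is a local unitary bringing $\ket{\psi}$ into $\ket{\xi_1}\tp\ket{\psi_1'}$ with the measurement action of copy~$1$ conjugated accordingly. For the inductive step I would assume the first $k$ copies have been certified, i.e.\ a local unitary has been constructed exhibiting $\ket{\psi}\cong\ket{\xi_k}\tp\bigotimes_{i\le k}\ket{\psi_i'}$ with the copy-$i$ measurements acting as $\M^{(i)}_{a|x}$ etc.\ on the $i$-th tensor factor. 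Then $\mathcal{J}^{k+1}=\beta_{k+1}$ forces every conditional expression $\mathcal{I}^{k+1}_{\mathbf{a}_k\mathbf{b}_k}$ to equal $\beta_{k+1}$ (a convex-combination-saturating-its-max argument, using that $\beta_{k+1}$ is the algebraic maximum of $\mathcal{I}_{k+1}$ and that $p(\mathbf{a}_k,\mathbf{b}_k|x,y)>0$ for every conditioning string — this positivity is inherited from the hypothesis $p(a,b|x,y)>0$ on each reference experiment, so the conditional states are well defined). Maximality of each conditional expression, via the single-copy self-test for $\mathcal{I}_{k+1}$, then yields that conditioning on $(\mathbf{a}_k,\mathbf{b}_k)$ does not disturb the $(k{+}1)$-th copy: the post-measurement state of the certified block tensors off, and one recovers a fresh self-test of $\ket{\psi_{k+1}'}$ on the residual system, which can be composed with the inductive unitary to give the $(k{+}1)$-copy statement. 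Iterating to $i=n$ gives the full claim; the only genuinely new point relative to Theorem~\ref{theorem1} is that the reference experiment, Bell value, and output alphabet carried along in the induction now depend on the copy index, which changes no step in substance.

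The step I expect to be the main obstacle — and where I would spend most of the care — is making precise the claim that ``maximality of all conditional Bell expressions implies the new copy is uncorrelated with the old ones'', in the heterogeneous setting. In the homogeneous case this is exactly the lemma invoked in the Appendix~\ref{apptheorem} proof of Theorem~\ref{theorem1}; here one must check that that lemma's proof only used (i) that $\beta_{k+1}$ is the quantum/algebraic maximum of $\mathcal{I}_{k+1}$, and (ii) the single-copy self-test induced by $\mathcal{I}_{k+1}$ — both of which hold per-copy by hypothesis — and did not secretly rely on the copies being identical. I would therefore phrase the appeal as: apply the Theorem~\ref{theorem1} machinery verbatim with the substitutions $\beta\mapsto\beta_{k+1}$, $R\mapsto\{\ket{\psi_{k+1}'},\M^{(k+1)},\N^{(k+1)}\}$, and output count $o\mapsto o_{k+1}$, and with the conditioning register being the already-certified $k$-copy block, whose positivity $p(\mathbf{a}_k,\mathbf{b}_k|x,y)>0$ is guaranteed because each factor state $\ket{\psi_i'}$ gives strictly positive outcome probabilities. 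A secondary, purely cosmetic obstacle is the normalization constant in $\mathcal{J}^{i}$, which must be $1/\prod_{j<i}o_j^2$ to keep $\mathcal{J}^{i}$ an honest average; I would flag this explicitly so the reader does not expect the $o^{-2(i-1)}$ of Eq.~\eqref{definitionIJ}. With these points addressed, the proof is essentially a relabelled copy of the proof of Theorem~\ref{theorem1}, and I would state it as such rather than reproduce the induction in full.
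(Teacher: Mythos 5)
Your proposal is correct and follows essentially the same route as the paper: the paper's Appendix on combining protocols likewise defines per-copy conditional Bell values $\mathcal{I}^{i+1}_{\mathbf{a}_i\mathbf{b}_i}$ with copy-dependent coefficients and maxima $\beta_{i+1}$, and runs the Theorem~\ref{theorem1} induction verbatim, the only change being that each inductive step certifies $\ket{\psi_{i+1}'}$ from the junk state of step $i$. Your remark about the averaging normalization (a product of the per-copy output cardinalities rather than a single $o^{2(i-1)}$) is also the right bookkeeping correction.
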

The proof of this is almost identical to the proof of Theorem \ref{theorem1}, instead now in \eqref{Conditional} the Bell expressions $\mathcal{I}_i$ are different for each $i$; see Appendix \ref{comb} for more details. One implication of this is the possibility to self-test a tensor product of different partially entangled pairs of qubits with two inputs for each party, which can be achieved through the use of different tilted CHSH inequalities \cite{Bamps}. 

\emph{All self-testing protocols can be paralellized}--- Here we discuss conditions for parallel self-testing without aiming to keep the number of inputs constant. As mentioned in the introduction, a number of schemes to parallel self-test certain classes of states are known (see \cite{Coladangelo2017a,McKague2017a,Coudron2016,2058-9565-3-1-015002,Diagram}), where the number of inputs scales exponentially with the number of copies to be self-tested. Given these examples, one interesting question is thus whether any self-testing protocol for a state $\ket{\psi}$ can be `parallelized' to self-test the tensor product  $\ket{\psi}^{\otimes n}$, at the expense of having exponentially many inputs. We give a positive answer to this question, answering an even a more general question of self-testing a tensor product $\bigotimes_{i}\ket{\psi_i}$ of different states,  given that there are self-tests for the individual states $\ket{\psi_i}$. Note that unlike the previous results, no restrictions are needed on the reference correlations (i.e. some probabilities can be zero). 

\begin{theorem}\label{theorem2}
Consider a set of $n$ bipartite Bell expressions $\{\mathcal{I}_i\}$ characterised by $m_i$ inputs and $o_i$ outputs, respectively, such that the value $\mathcal{I}_i(\{p(a_i,b_i\vert x_i,y_i)\})=\beta_i$ self-tests the reference experiment $R_i = \{\ket{\psi_i'},\M'_{a_i\vert x_i},\N'_{b_i\vert y_i}\}$, where $\M'_{a_i\vert x_i},\N'_{b_i\vert y_i}$ are rank-one projectors for each $i$. Then, the correlations obtained by performing the $R_i$'s in parallel as in Fig.\ 1b self-test the reference experiment $R_n=\{\bigotimes_{i=1}\ket{\psi_i'}, \bigotimes_i\M'_{a_i\vert x},\otimes_{i=1}^n\N'_{b_i\vert y}\}$.
\end{theorem}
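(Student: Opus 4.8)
\emph{Proof sketch.} The plan is to reduce the claim to the $n$ individual single-copy self-tests supplied by the hypothesis and then stitch the resulting local isometries together, one copy at a time. Let $\ket{\psi}_{\rA\rB\rP}$, $\{\M_{\mathbf{a}\vert\mathbf{x}}\}$, $\{\N_{\mathbf{b}\vert\mathbf{y}}\}$ be an arbitrary realisation of the observed correlations; by construction these equal $p(\mathbf{a},\mathbf{b}\vert\mathbf{x},\mathbf{y})=\prod_{i=1}^{n}p_i(a_i,b_i\vert x_i,y_i)$ with $\mathcal{I}_i(\{p_i\})=\beta_i$. For each $i$ and each $x_i\in\{0,\dots,m_i-1\}$ introduce the coarse-grained operators $\widetilde{\M}^{(i)}_{a_i\vert x_i}=\sum_{a_j:\,j\neq i}\M_{\mathbf{a}\vert\mathbf{x}}$, where $\mathbf{x}$ is the input string carrying $x_i$ in slot $i$ and $0$ elsewhere, and similarly $\widetilde{\N}^{(i)}_{b_i\vert y_i}$ on Bob's side. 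The product form of the correlations gives $\bra{\psi}\widetilde{\M}^{(i)}_{a_i\vert x_i}\tp\widetilde{\N}^{(i)}_{b_i\vert y_i}\tp\openone_\rP\ket{\psi}=p_i(a_i,b_i\vert x_i,y_i)$, so for each $i$ the marginal realisation $(\ket{\psi},\widetilde{\M}^{(i)}_{a_i\vert x_i},\widetilde{\N}^{(i)}_{b_i\vert y_i})$ reproduces exactly the reference correlations $\{p_i\}$ (in particular attaining $\mathcal{I}_i=\beta_i$) and is therefore equivalent, in the sense of \eqref{eq:oneselftest}, to the reference $R_i$. Note that, in contrast with Theorems \ref{theorem1} and \ref{thm:fullstat}, no positivity of the reference correlations is required here, since we only ever sum over the other outcomes rather than condition on them.

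Next I would run an induction that peels off the copies successively. Suppose that after handling the first $k$ copies we have a local unitary mapping $\ket{\psi}\tp\ket{0\cdots0}$ to $\ket{\xi^{(k)}}_{\rA\rB\rP}\tp\bigotimes_{i\le k}\ket{\psi_i'}_{\rA_i'\rB_i'}$ and under which, for $i\le k$, $\widetilde{\M}^{(i)}_{a_i\vert x_i}$ acts as $\M'_{a_i\vert x_i}$ on $\rA_i'$ and as the identity everywhere else on the state (and likewise for Bob). To add copy $k+1$ one shows that the residual data — the junk state $\ket{\xi^{(k)}}$ together with the images of $\widetilde{\M}^{(k+1)}_{a_{k+1}\vert x_{k+1}}$ and $\widetilde{\N}^{(k+1)}_{b_{k+1}\vert y_{k+1}}$ — still reproduces the reference correlations $p_{k+1}$, and then applies the single-copy self-test for $R_{k+1}$ to obtain the next isometry. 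After all $n$ copies have been extracted one finally checks that the genuine POVMs $\M_{\mathbf{a}\vert\mathbf{x}}$ and $\N_{\mathbf{b}\vert\mathbf{y}}$ for arbitrary input strings (not merely the ``default'' strings used to define the coarse-grained operators) are mapped to $\bigotimes_{i}\M'_{a_i\vert x_i}$ and $\bigotimes_{i}\N'_{b_i\vert y_i}$, which yields the full statement \eqref{eq:oneselftest} for $R_n$.

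I expect the real obstacle to sit inside the induction: showing that extracting copy $i$ does not disturb the others, i.e. that $\widetilde{\M}^{(i)}_{a_i\vert x_i}$ and $\widetilde{\M}^{(j)}_{a_j\vert x_j}$ (and similarly $\widetilde{\N}^{(i)}$ and $\widetilde{\N}^{(j)}$) commute when applied to $\ket{\psi}$ for $i\neq j$, and that this commutation is inherited after the isometry, so that the copy-$(k+1)$ operators descend to operators acting only on the junk $\ket{\xi^{(k)}}$ and generating the correct marginal statistics. This is precisely where the hypothesis that the reference measurements are rank-one projectors is used: self-testing then fixes each $\widetilde{\M}^{(i)}_{a_i\vert x_i}$ on the support of $\ket{\psi}$ as the rank-one projector $\M'_{a_i\vert x_i}$, and completeness $\sum_{a_i}\M'_{a_i\vert x_i}=\openone$ together with the product form $\prod_i p_i$ of the correlations forces the different copies' operators into a tensor-product structure on the relevant subspace; the same rigidity is what allows the generic POVMs $\M_{\mathbf{a}\vert\mathbf{x}}$ to be reconstructed from the coarse-grained ones in the final step. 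Concretely, I would isolate these commutation-on-the-state relations as a preliminary lemma and then let the peeling induction proceed essentially as for Theorem \ref{theorem1}.
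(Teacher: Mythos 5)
Your overall strategy matches the paper's: coarse-grain the $n$-copy POVMs into effective single-copy realisations, invoke the hypothesised single-copy self-tests, and peel the copies off inductively, using rank-one-ness of the reference projectors to force a tensor-product structure. Two of your concrete choices, however, would break down if the sketch were executed.

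First, defining $\widetilde{\M}^{(i)}_{a_i\vert x_i}$ by freezing the other slots at a default input string is a genuine problem, not a matter of taste. The paper instead averages uniformly over the other parties' inputs, $\M_{a_i\vert x_i}=\frac{1}{\prod_{j\neq i}m_j}\sum_{\aA_{(i)},\xX_{(i)}}\M_{\aA\vert\xX}$. The reason this matters is that the inductive step for copy $i+1$ must control the partially coarse-grained operators carrying the outcomes $a_1,\dots,a_{i+1}$ at \emph{every} value of $x_{i+1}$, whereas your copy-$j$ self-tests ($j\le i$) only ever constrain operators whose slots beyond $j$ carry the default input. The product form of the correlations fixes the \emph{statistics} of the non-default coarse-grainings but not the operators themselves, nor does it guarantee they are certified by the same isometry, so the induction cannot close. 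With the averaged operators the bookkeeping does close: summing the finer operator over $a_{2}$ and averaging over $x_{2}$ returns the already-certified image $\openone^{\rA}\tp{\M'}_{a_1\vert x_1}$, a projector, and an average of operators between $0$ and $\openone$ that equals a projector must have every term equal to that projector; rank-one-ness of ${\M'}_{a_1\vert x_1}$ then excludes any correlation across the physical/ancilla cut and yields the decomposition $\K_{a_1,a_2\vert x_1,x_2}\tp{\M'}_{a_1\vert x_1}$. This sum-to-projector argument, not a commutation lemma, is the mechanism; commutation on the state is a byproduct, and I do not see how to drive the induction from it directly. Second, the engine of the inductive step is absent from your sketch: substituting the above decomposition into the input-averaged Bell expression $\mathcal{J}^{2}$ makes it factorise as a convex combination, with weights $p_1(a_1,b_1\vert x_1,y_1)/m_1^2$, of Bell values of the residual operators $\K,\Ll$ evaluated on the junk state $\ket{\xi_1}$; since each such value is at most $\beta_{2}$ and the combination equals $\beta_{2}$, every one of them is maximal, and only then may the single-copy self-test for $R_{2}$ be applied to $\ket{\xi_1}$. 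Your phrase ``the residual data still reproduces $p_{k+1}$'' is exactly the claim that requires this argument.
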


The result is proven by constructing a single Bell expression whose maximal value self-tests $R_n$. For each pair $i$ define the Bell value conditioned on particular choice of other inputs $x_j,y_j$, $j\neq i$:
\begin{equation}
    \mathcal{I}^i_{\xX_{(i)}\yY_{(i)}} = \sum_{\aA_{(i)},\bB_{(i)}}\mathcal{I}_i(p(\aA,\bB|\xX,\yY))
\end{equation}
where $\aA_{(i)} = \{a_1,\cdots a_{i-1},a_{i+1},\cdots, a_n\}$ and similarly for  $\bB_{(i)}$, $\xX_{(i)}$ and $\yY_{(i)}$.  The Bell expression $\mathcal{I}_i$ averaged over all other inputs $\xX_{(i)}$ and $\yY_{(i)}$ is 
\begin{equation}
   \mathcal{J}^{i} = \frac{1}{\prod_{j\neq i}m_j^2} \sum_{\mathbf{x}_{(i)}\mathbf{y}_{(i)}} \mathcal{I}^i_{\xX_{(i)}\yY_{(i)}}.
\end{equation}
If $\mathcal{J}^{i}$ achieves its maximal value, it means that for for every  $\xX_i$ and $\yY_i$ $ \mathcal{I}^i_{\xX_i\yY_i}$ also achieves the maximal value, which is enough to prove self-testing of the reference experiment $R_n$. The proof of the theorem is given in Appendix \ref{genparallel}. 

\emph{Application: unbounded randomness expansion.}-- Self-testing is intrinsically related to  device-independent randomness expansion \cite{Coudron:2014:IRE:2591796.2591873,expansion1,expansion2}. This is because, once we certify that the system is in the state $\ket{\psi}$, we can also conclude that any external system is uncorrelated to it. Thus, an external observer can not predict the outcomes of the measurements applied to the system of interest, \ie the outcomes are random. In particular, if the state is maximally entangled of local dimension $d$, and the measurements applied to it are rank-one projective, the amount of random bits obtained per round is $\log_2 (d)$. Notice, however, that in a Bell test some initial amount of randomness must be consumed in the choice of inputs. Thus, the efficiency of the randomness generation protocol is typically given by the the trade-off between the initial randomness consumed and the final randomness obtained. Our Theorem \ref{theorem1} applied to the CHSH inequality and self-testing $n$ copies of a maximally entangled state shows that the best trade-off can be achieved from the noiseless correlations, where only one bit of randomness is used to generate $\log_2(d)$ bits or randomness per round. We thus suspect that our results could improve extraction rates of practical device-independent randomness expansion protocols, although this would require a closer look at issues of noise robustness. 

\emph{Discussion}--In this manuscript we introduced a new procedure useful in parallel self-testing. It allows to self-test highly entangled quantum states with a constant number of measurements. Such schemes are important in protocols for randomness expansion: a small amount of randomness can be expanded to a string of unbounded length. At the heart of our construction lies an interesting insight: independent Bell violations can be used to ensure independence of sources even when the measurements are perfectly correlated. There are several directions for future research on the topic. It would be interesting to explore noise robustness of our scheme. Furthermore, the condition for self-testing can be seen as the maximal violation of a non-linear Bell inequality. One might try to understand if this can be achieved using a single linear Bell inequality.

\emph{Note}-- While working on this project we became aware of the work \cite{warsaw} exploring self-testing of quantum systems of arbitrary local dimension with minimal number of measurements.

\emph{Acknowledgements}--- We thank Jean-Daniel Bancal for sharing numerical data, to Jed Kaniewski and Yeong-Cherng Liang for useful comments and suggestions and to the anonymous referee for many useful comments. We acknowledge funding from SNSF (grant DIAQ), Ramon y Cajal fellowship, the Government of Spain (FIS2020-TRANQI, Retos Quspin and Severo Ochoa CEX2019-000910-S), Fundació Cellex, Fundació Mir-Puig, Generalitat de Catalunya (CERCA, AGAUR SGR 1381),ERC AdG CERQUTE and the AXA Chair in Quantum Information Science.

\onecolumngrid
\bibliographystyle{alphaurl}
\bibliography{sample}

\newcommand{\etalchar}[1]{$^{#1}$}
\begin{thebibliography}{SMSC{\etalchar{+}}15}

\bibitem[ABG{\etalchar{+}}07]{acin2007device}
Antonio Ac{\'\i}n, Nicolas Brunner, Nicolas Gisin, Serge Massar, Stefano
  Pironio, and Valerio Scarani.
\newblock Device-independent security of quantum cryptography against
  collective attacks.
\newblock {\em Physical Review Letters}, 98(23):230501, 2007.
\newblock \href {http://dx.doi.org/10.1103/PhysRevLett.98.230501}
  {\path{doi:10.1103/PhysRevLett.98.230501}}.

\bibitem[BCP{\etalchar{+}}14]{bellreview}
Nicolas Brunner, Daniel Cavalcanti, Stefano Pironio, Valerio Scarani, and
  Stephanie Wehner.
\newblock Bell nonlocality.
\newblock {\em Rev. Mod. Phys.}, 86:419--478, Apr 2014.
\newblock \href {http://dx.doi.org/10.1103/RevModPhys.86.419}
  {\path{doi:10.1103/RevModPhys.86.419}}.

\bibitem[Bel64]{bell}
John~Stewart Bell.
\newblock On the {E}instein-{P}odolsky-{R}osen paradox.
\newblock {\em Physics}, 1:195--200, 1964.
\newblock URL: \url{https://cds.cern.ch/record/111654}.

\bibitem[BKM19]{Diagram}
Spencer Breiner, Amir Kalev, and Carl~A. Miller.
\newblock Parallel self-testing of the {GHZ} state with a proof by diagrams.
\newblock In Peter Selinger and Giulio Chiribella, editors, {\em {\rm
  Proceedings of the 15th International Conference on} Quantum Physics and
  Logic, {\rm Halifax, Canada, 3-7th June 2018}}, volume 287 of {\em Electronic
  Proceedings in Theoretical Computer Science}, pages 43--66. Open Publishing
  Association, 2019.
\newblock \href {http://dx.doi.org/10.4204/EPTCS.287.3}
  {\path{doi:10.4204/EPTCS.287.3}}.

\bibitem[BNS{\etalchar{+}}15]{PhysRevA.91.022115}
Jean-Daniel Bancal, Miguel Navascu\'es, Valerio Scarani, Tam\'as V\'ertesi, and
  Tzyh~Haur Yang.
\newblock Physical characterization of quantum devices from nonlocal
  correlations.
\newblock {\em Phys. Rev. A}, 91:022115, Feb 2015.
\newblock \href {http://dx.doi.org/10.1103/PhysRevA.91.022115}
  {\path{doi:10.1103/PhysRevA.91.022115}}.

\bibitem[BP15]{Bamps}
C\'edric Bamps and Stefano Pironio.
\newblock Sum-of-squares decompositions for a family of
  {C}lauser-{H}orne-{S}himony-{H}olt-like inequalities and their application to
  self-testing.
\newblock {\em Phys. Rev. A}, 91:052111, May 2015.
\newblock \href {http://dx.doi.org/10.1103/PhysRevA.91.052111}
  {\path{doi:10.1103/PhysRevA.91.052111}}.

\bibitem[B{\v{S}}CA18a]{Bowles2018a}
Joseph Bowles, Ivan {\v{S}}upi{\'{c}}, Daniel Cavalcanti, and Antonio
  Ac\'{\i}n.
\newblock Device-independent entanglement certification of all entangled
  states.
\newblock {\em Phys. Rev. Lett.}, 121:180503, Oct 2018.
\newblock \href {http://dx.doi.org/10.1103/PhysRevLett.121.180503}
  {\path{doi:10.1103/PhysRevLett.121.180503}}.

\bibitem[B{\v{S}}CA18b]{BSCA}
Joseph Bowles, Ivan {\v{S}}upi\'{c}, Daniel Cavalcanti, and Antonio Ac\'{\i}n.
\newblock Self-testing of {P}auli observables for device-independent
  entanglement certification.
\newblock {\em Phys. Rev. A}, 98:042336, Oct 2018.
\newblock \href {http://dx.doi.org/10.1103/PhysRevA.98.042336}
  {\path{doi:10.1103/PhysRevA.98.042336}}.

\bibitem[CGJV19]{leash}
Andrea Coladangelo, Alex~B Grilo, Stacey Jeffery, and Thomas Vidick.
\newblock Verifier-on-a-leash: new schemes for verifiable delegated quantum
  computation, with quasilinear resources.
\newblock In {\em Annual International Conference on the Theory and
  Applications of Cryptographic Techniques}, pages 247--277. Springer, 2019.
\newblock \href {http://dx.doi.org/10.1007/978-3-030-17659-4_9}
  {\path{doi:10.1007/978-3-030-17659-4_9}}.

\bibitem[CGS17]{Coladangelo2017b}
Andrea Coladangelo, Koon~Tong Goh, and Valerio Scarani.
\newblock All pure bipartite entangled states can be self-tested.
\newblock {\em Nature Communications}, 8:15485, may 2017.
\newblock \href {http://dx.doi.org/10.1038/ncomms15485}
  {\path{doi:10.1038/ncomms15485}}.

\bibitem[CHSH69]{chsh}
John~F. Clauser, Michael~A. Horne, Abner Shimony, and Richard~A. Holt.
\newblock Proposed experiment to test local hidden-variable theories.
\newblock {\em Phys. Rev. Lett.}, 23:880--884, Oct 1969.
\newblock \href {http://dx.doi.org/10.1103/PhysRevLett.23.880}
  {\path{doi:10.1103/PhysRevLett.23.880}}.

\bibitem[CK11]{expansion1}
Roger Colbeck and Adrian Kent.
\newblock Private randomness expansion with untrusted devices.
\newblock {\em Journal of Physics A: Mathematical and Theoretical},
  44(9):095305, 2011.
\newblock \href {http://dx.doi.org/10.1088/1751-8113/44/9/095305}
  {\path{doi:10.1088/1751-8113/44/9/095305}}.

\bibitem[CN16]{Coudron2016}
Matthew Coudron and Anand Natarajan.
\newblock The parallel-repeated magic square game is rigid, 2016.
\newblock arXiv:1609.06306.
\newblock URL: \url{https://arxiv.org/abs/1609.06306}.

\bibitem[Col17]{Coladangelo2017a}
Andrea Coladangelo.
\newblock Parallel self-testing of (tilted) {EPR} pairs via copies of (tilted)
  {CHSH} and the magic square game.
\newblock {\em Quantum Information and Computation}, 17(9-10):831--865, 2017.
\newblock \href {http://dx.doi.org/10.26421/QIC17.9-10-6}
  {\path{doi:10.26421/QIC17.9-10-6}}.

\bibitem[CRSV18]{Chao2016}
Rui Chao, Ben~W. Reichardt, Chris Sutherland, and Thomas Vidick.
\newblock Test for a large amount of entanglement, using few measurements.
\newblock {\em {Quantum}}, 2:92, September 2018.
\newblock \href {http://dx.doi.org/10.22331/q-2018-09-03-92}
  {\path{doi:10.22331/q-2018-09-03-92}}.

\bibitem[CY14]{Coudron:2014:IRE:2591796.2591873}
Matthew Coudron and Henry Yuen.
\newblock Infinite randomness expansion with a constant number of devices.
\newblock In {\em Proceedings of the Forty-sixth Annual ACM Symposium on Theory
  of Computing}, STOC '14, pages 427--436, New York, NY, USA, 2014. ACM.
\newblock \href {http://dx.doi.org/10.1145/2591796.2591873}
  {\path{doi:10.1145/2591796.2591873}}.

\bibitem[GVW{\etalchar{+}}15]{loopholefree3}
Marissa Giustina, Marijn A.~M. Versteegh, S\"oren Wengerowsky, Johannes
  Handsteiner, Armin Hochrainer, Kevin Phelan, Fabian Steinlechner, Johannes
  Kofler, Jan-\AA{}ke Larsson, Carlos Abell\'an, Waldimar Amaya, Valerio
  Pruneri, Morgan~W. Mitchell, J\"orn Beyer, Thomas Gerrits, Adriana~E. Lita,
  Lynden~K. Shalm, Sae~Woo Nam, Thomas Scheidl, Rupert Ursin, Bernhard
  Wittmann, and Anton Zeilinger.
\newblock Significant-loophole-free test of bell's theorem with entangled
  photons.
\newblock {\em Phys. Rev. Lett.}, 115:250401, Dec 2015.
\newblock \href {http://dx.doi.org/10.1103/PhysRevLett.115.250401}
  {\path{doi:10.1103/PhysRevLett.115.250401}}.

\bibitem[HBD{\etalchar{+}}15]{loopholefree1}
B.~Hensen, H.~Bernien, A.~E. Dréau, A.~Reiserer, N.~Kalb, M.~S. Blok,
  J.~Ruitenberg, R.~F.~L. Vermeulen, R.~N. Schouten, C.~Abellán, and et~al.
\newblock Loophole-free bell inequality violation using electron spins
  separated by 1.3 kilometres.
\newblock {\em Nature}, 526(7575):682–686, 2015.
\newblock \href {http://dx.doi.org/10.1038/nature15759}
  {\path{doi:10.1038/nature15759}}.

\bibitem[JMS20]{parallelQKD}
Rahul Jain, Carl~A. Miller, and Yaoyun Shi.
\newblock Parallel device-independent quantum key distribution.
\newblock {\em IEEE Transactions on Information Theory}, 66(9):5567–5584, Sep
  2020.
\newblock \href {http://dx.doi.org/10.1109/tit.2020.2986740}
  {\path{doi:10.1109/tit.2020.2986740}}.

\bibitem[KM18]{2058-9565-3-1-015002}
Amir Kalev and Carl~A Miller.
\newblock Rigidity of the magic pentagram game.
\newblock {\em Quantum Science and Technology}, 3(1):015002, 2018.
\newblock \href {http://dx.doi.org/10.1088/1367-2630/18/2/025021}
  {\path{doi:10.1088/1367-2630/18/2/025021}}.

\bibitem[LLR{\etalchar{+}}21]{expansion2}
Wen-Zhao Liu, Ming-Han Li, Sammy Ragy, Si-Ran Zhao, Bing Bai, Yang Liu, Peter~J
  Brown, Jun Zhang, Roger Colbeck, Jingyun Fan, et~al.
\newblock Device-independent randomness expansion against quantum side
  information.
\newblock {\em Nature Physics}, pages 1--4, 2021.
\newblock \href {http://dx.doi.org/10.1038/s41567-020-01147-2}
  {\path{doi:10.1038/s41567-020-01147-2}}.

\bibitem[McK17]{McKague2017a}
Matthew McKague.
\newblock Self-testing in parallel with {CHSH}.
\newblock {\em {Quantum}}, 1:1, April 2017.
\newblock \href {http://dx.doi.org/10.22331/q-2017-04-25-1}
  {\path{doi:10.22331/q-2017-04-25-1}}.

\bibitem[MY04]{Mayers2004}
Dominic Mayers and Andrew Yao.
\newblock Self testing quantum apparatus.
\newblock {\em Quantum Info. Comput.}, 4:273, 2004.
\newblock \href {http://dx.doi.org/10.26421/QIC4.4-3}
  {\path{doi:10.26421/QIC4.4-3}}.

\bibitem[NPA07]{npa1}
Miguel Navascu\'es, Stefano Pironio, and Antonio Ac\'{\i}n.
\newblock Bounding the set of quantum correlations.
\newblock {\em Phys. Rev. Lett.}, 98:010401, Jan 2007.
\newblock \href {http://dx.doi.org/10.1103/PhysRevLett.98.010401}
  {\path{doi:10.1103/PhysRevLett.98.010401}}.

\bibitem[NPA08]{npa2}
Miguel Navascu{\'{e}}s, Stefano Pironio, and Antonio Ac{\'{i}}n.
\newblock A convergent hierarchy of semidefinite programs characterizing the
  set of quantum correlations.
\newblock {\em New Journal of Physics}, 10(7):073013, 2008.
\newblock \href {http://dx.doi.org/10.1088/1367-2630/10/7/073013}
  {\path{doi:10.1088/1367-2630/10/7/073013}}.

\bibitem[NV17]{Natarajan}
Anand Natarajan and Thomas Vidick.
\newblock A quantum linearity test for robustly verifying entanglement.
\newblock In {\em Proceedings of the 49th Annual ACM SIGACT Symposium on Theory
  of Computing}, STOC 2017, pages 1003--1015, New York, NY, USA, 2017. ACM.
\newblock \href {http://dx.doi.org/10.1145/3055399.3055468}
  {\path{doi:10.1145/3055399.3055468}}.

\bibitem[NV18]{Natarajan2018}
Anand {Natarajan} and Thomas {Vidick}.
\newblock Low-degree testing for quantum states, and a quantum entangled games
  pcp for qma.
\newblock In {\em 2018 IEEE 59th Annual Symposium on Foundations of Computer
  Science (FOCS)}, pages 731--742, Oct 2018.
\newblock \href {http://dx.doi.org/10.1109/FOCS.2018.00075}
  {\path{doi:10.1109/FOCS.2018.00075}}.

\bibitem[OV16]{Ostrev2016}
Dimiter Ostrev and Thomas Vidick.
\newblock The structure of nearly-optimal quantum strategies for the {CHSH} (n)
  {XOR} games.
\newblock {\em Quantum Information \& Computation}, 16(13-14),
  pp.(13-14):1191--1211, 2016.
\newblock \href {http://dx.doi.org/10.26421/QIC16.13-14-6}
  {\path{doi:10.26421/QIC16.13-14-6}}.

\bibitem[PAM{\etalchar{+}}10]{pironio2010random}
Stefano Pironio, Antonio Ac{\'\i}n, Serge Massar, A~Boyer de~La~Giroday,
  Dzimitry~N Matsukevich, Peter Maunz, Steven Olmschenk, David Hayes, Le~Luo,
  T~Andrew Manning, et~al.
\newblock Random numbers certified by {B}ell’s theorem.
\newblock {\em Nature}, 464(7291):1021--1024, 2010.
\newblock \href {http://dx.doi.org/10.1038/nature09008}
  {\path{doi:10.1038/nature09008}}.

\bibitem[PR92]{Popescu1992}
Sandu Popescu and Daniel Rohrlich.
\newblock Which states violate {B}ell's inequality maximally?
\newblock {\em Physics Letters A}, 169(6):411--414, 1992.
\newblock \href {http://dx.doi.org/10.1016/0375-9601(92)90819-8}
  {\path{doi:10.1016/0375-9601(92)90819-8}}.

\bibitem[RUV13]{ruv}
Ben~W. Reichardt, Falk Unger, and Umesh Vazirani.
\newblock Classical command of quantum systems.
\newblock {\em Nature}, 496:456, 2013.
\newblock \href {http://dx.doi.org/10.1038/nature12035}
  {\path{doi:10.1038/nature12035}}.

\bibitem[SMSC{\etalchar{+}}15]{loopholefree2}
Lynden~K. Shalm, Evan Meyer-Scott, Bradley~G. Christensen, Peter Bierhorst,
  Michael~A. Wayne, Martin~J. Stevens, Thomas Gerrits, Scott Glancy, Deny~R.
  Hamel, Michael~S. Allman, Kevin~J. Coakley, Shellee~D. Dyer, Carson Hodge,
  Adriana~E. Lita, Varun~B. Verma, Camilla Lambrocco, Edward Tortorici, Alan~L.
  Migdall, Yanbao Zhang, Daniel~R. Kumor, William~H. Farr, Francesco Marsili,
  Matthew~D. Shaw, Jeffrey~A. Stern, Carlos Abell\'an, Waldimar Amaya, Valerio
  Pruneri, Thomas Jennewein, Morgan~W. Mitchell, Paul~G. Kwiat, Joshua~C.
  Bienfang, Richard~P. Mirin, Emanuel Knill, and Sae~Woo Nam.
\newblock Strong loophole-free test of local realism.
\newblock {\em Phys. Rev. Lett.}, 115:250402, Dec 2015.
\newblock \href {http://dx.doi.org/10.1103/PhysRevLett.115.250402}
  {\path{doi:10.1103/PhysRevLett.115.250402}}.

\bibitem[SSKA19]{warsaw}
Shubhayan Sarkar, Debashis Saha, Jedrzej Kaniewski, and Remigiusz Augusiak.
\newblock Self-testing quantum systems of arbitrary local dimension with the
  minimal number of measurements, 2019.
\newblock \href {http://arxiv.org/abs/1909.12722} {\path{arXiv:1909.12722}}.

\bibitem[SW87]{Summers1987}
Stephen~J. Summers and Reinhard~F. Werner.
\newblock Maximal violation of {B}ell's inequalities is generic in quantum
  field theory.
\newblock {\em Communications in Mathematical Physics}, 110(2):247--259, 1987.
\newblock \href {http://dx.doi.org/10.1007/BF01207366}
  {\path{doi:10.1007/BF01207366}}.

\bibitem[Tsi93]{Tsirelson1993}
Boris Tsirelson.
\newblock Some results and problems on quantum {B}ell-type inequalities.
\newblock {\em Hadronis Journal Supplement}, 8:329--45, 1993.
\newblock URL: \url{https://ci.nii.ac.jp/naid/10026857475/en/}.

\bibitem[vB20]{STreview}
Ivan \v{S}upi\'{c} and Joseph Bowles.
\newblock Self-testing of quantum systems: a review.
\newblock {\em Quantum}, 4:337, Sep 2020.
\newblock \href {http://dx.doi.org/10.22331/q-2020-09-30-337}
  {\path{doi:10.22331/q-2020-09-30-337}}.

\bibitem[Vid17]{parallelQKD2}
Thomas Vidick.
\newblock {P}arallel {DIQKD} from parallel repetition, 2017.
\newblock \href {http://arxiv.org/abs/1703.08508} {\path{arXiv:1703.08508}}.

\bibitem[WBMS16]{Wu2016}
Xingyao Wu, Jean-Daniel Bancal, Matthew McKague, and Valerio Scarani.
\newblock Device-independent parallel self-testing of two singlets.
\newblock {\em Phys. Rev. A}, 93:062121, 2016.
\newblock \href {http://dx.doi.org/10.1103/PhysRevA.93.062121}
  {\path{doi:10.1103/PhysRevA.93.062121}}.

\bibitem[WWS16]{1367-2630-18-2-025021}
Yukun Wang, Xingyao Wu, and Valerio Scarani.
\newblock All the self-testings of the singlet for two binary measurements.
\newblock {\em New Journal of Physics}, 18(2):025021, 2016.
\newblock \href {http://dx.doi.org/10.1088/1367-2630/18/2/025021}
  {\path{doi:10.1088/1367-2630/18/2/025021}}.

\bibitem[YN13]{Yang}
Tzyh~Haur Yang and Miguel Navascu\'es.
\newblock Robust self-testing of unknown quantum systems into any entangled
  two-qubit states.
\newblock {\em Phys. Rev. A}, 87:050102, May 2013.
\newblock \href {http://dx.doi.org/10.1103/PhysRevA.87.050102}
  {\path{doi:10.1103/PhysRevA.87.050102}}.

\bibitem[YVB{\etalchar{+}}14]{PhysRevLett.113.040401}
Tzyh~Haur Yang, Tam\'as V\'ertesi, Jean-Daniel Bancal, Valerio Scarani, and
  Miguel Navascu\'es.
\newblock Robust and versatile black-box certification of quantum devices.
\newblock {\em Phys. Rev. Lett.}, 113:040401, Jul 2014.
\newblock \href {http://dx.doi.org/10.1103/PhysRevLett.113.040401}
  {\path{doi:10.1103/PhysRevLett.113.040401}}.

\end{thebibliography}

\onecolumngrid
\setcounter{equation}{0}
\setcounter{figure}{0}
\setcounter{table}{0}
\setcounter{section}{0}
\makeatletter
\renewcommand{\theequation}{A\arabic{equation}}
\renewcommand{\thefigure}{A\arabic{figure}}
v

\begin{appendix}

\section{Proof of Theorem \ref{theorem1}}\label{apptheorem}

Before starting the proofs let us introduce some useful notation. The probabilities to observe outcomes $\aA = (a_1,\cdots, a_n)$ and $\bB = (b_1,\cdots, b_n)$ when the inputs are $x$ and $y$ are given by the Born rule:
\begin{equation}
    p(\aA,\bB|x,y) = \Tr[\M_{\aA|x}\tp\N_{\bB|y}\rho^{\mathrm{A}\mathrm{B}}].
\end{equation}
Let us introduce further auxilliary measurement operators
\begin{equation}
    \M_{a|x}^{(i)} = \sum_{\aA,a_i = a}\M_{\aA|x} \qquad \N_{b|y}^{(i)} = \sum_{\bB,b_i = b}\N_{\bB|y}.
\end{equation}
With $\M'_{a|x}$ we simply denote the reference single qudit measurements. Since we extract the tensor product of the reference state into the ancillary Hilbert space, to ease keeping track of the number of the extracted reference states we denote the Hilbert spaces of ancillary systems with $\rA_j$ and $\rB_j$ (instead of $\rA'$ and $\rB'$ in the main text). In order to relax the notation when writing measurement operators we omit the Hilbert space notation. Thus, we employ the following notation: $\M_{a|x}^{(i)} \equiv {\M_{a|x}^{(i)}}^{\rA}$, $\M_{\aA|x} \equiv \M_{\aA|x}^{\rA}$, $\N_{b|y}^{(i)} \equiv {\N_{b|y}^{(i)}}^{\rB}$, $\N_{\bB|y} \equiv \N_{\bB|y}^{\rB}$ for physical measurements and $\M'_{a|x} \equiv {\M'}_{a|x}^{\rA_j}$,  $\N'_{b|y} \equiv {\N'}_{b|y}^{\rB_j}$ for reference measurements.   

We prove the theorem by using mathematical induction. In the first step we prove the base case, \emph{i.e.} that the theorem holds when $n=1$. In the second step we prove the so-called inductive step, saying that if the theorem holds for some natural number $i$ it also holds for $i+1$. The whole theorem is proven by demonstrating the correctness of the base and the inductive step. The validity of the base step holds trivially since the theorem assumes that the Bell inequality under consideration is self-testing. The condition $\mathcal{I}_1 = \beta$  implies the existence of the local unitary $U_1 = U_{\rA\rA_1}\tp U_{\rB\rB_1}\tp\idd_\rP$ such that
\begin{align}\label{eq:state1}
    U_1\ket{\psi}^{\ab}\tp\ket{00}^{\rA_1\rB_1} &= \ket{\xi_1}^{\ab}\tp\ket{\psi'}^{\rA_1\rB_1}\\ \label{eq:mnstate1}
    U_1{\M}_{a|x}^{(1)}\tp{\N}_{b|y}^{(1)}\ket{\psi}^{\ab}\tp\ket{00}^{\rA_1\rB_1} &= \ket{\xi_1}^{\ab}\tp\left({{\M}'}_{a|x}^{\rA_1}\tp{{\N}'}_{b|y}^{\rB_1}\right)\ket{\psi'}^{\rA_1\rB_1}
\end{align}
To start the inductive step assume that the theorem holds for $i$, \emph{i.e.} that $\sum_{\aA_{i-1}\bB_{i-1}}\mathcal{I}_{\aA_{i-1}\bB_{i-1}}^{i} = o^{2(i-1)}\beta$ implies there exist the local unitary $U_i = U_{\rA\rA_1\cdots\rA_i}\tp U_{\rB\rB_1\cdots\rB_i}\tp\idd_\rP$ such that
\begin{align}\label{i-1state}
      U_i\ket{\psi}^{\ab}\tp\ket{00}^{\rA_1\rB_1}\cdots \tp\ket{00}^{\rA_i\rB_i} &=\ket{\xi_{i}}^{\ab} \tp \ket{\psi'}^{\rA_1\rB_1}\tp \cdots\tp \ket{\psi'}^{\rA_i\rB_i},\\ \label{i-1stateandmeas}
    U_i\M_{\aA_i|x}\tp\N_{\bB_i|y}\ket{\psi}^{\ab}\tp\ket{00}^{\rA_1\rB_1}\cdots \tp\ket{00}^{\rA_i\rB_i} &= \ket{\xi_{i}}^{\ab} \bigotimes_{j=1}^i \M'_{a_j|x}\tp\N'_{b_j|y}\ket{\psi'}^{\rA_j\rB_j}.
\end{align}
By summing \eqref{i-1stateandmeas} over $\bB_{i}$ and using the completeness relation $\sum_{\bB_i}{\N}_{{\bB_i}|y} = \idd$ we obtain
\begin{equation*}
    U_i\M_{\aA_i|x}\ket{\psi}^{\ab}\tp\ket{00}^{\rA_1\rB_1}\cdots \tp\ket{00}^{\rA_i\rB_i} = \ket{\xi_{i}}^{\ab} \bigotimes_{j=1}^i \M'_{a_j|x}\tp\idd^{\rB_j}\ket{\psi'}^{\rA_j\rB_j},
\end{equation*}
which can be rewritten as 
\begin{multline}\label{shansha}
    \left(U_{\rA\rA_1\cdots\rA_i}\M_{\aA_i|x}\tp\idd^{\rA_1\cdots\rA_{i}}U_{\rA\rA_1\cdots\rA_i}^\dagger\right) U_i\ket{\psi}^{\ab}\tp\ket{00}^{\rA_1\rB_1}\cdots \tp\ket{00}^{\rA_i\rB_i} = \\ =  \ket{\xi_{i}}^{\ab} \bigotimes_{j=1}^i \M'_{a_j|x}\tp\idd^{\rB_j}\ket{\psi'}^{\rA_j\rB_j}.
\end{multline}
By comparing \eqref{shansha} and \eqref{i-1state} we obtain
\begin{equation}\label{yara}
    U_{\rA\rA_1\cdots\rA_i}\M_{\aA_i|x}\tp\idd^{\rA_1\cdots\rA_{i}}U_{\rA\rA_1\cdots\rA_i}^\dagger = S_{\aA_i|x}^{\rA}\tp {\M'}_{\aA_i|x}^{\rA_1\cdots\rA_i}
\end{equation}
where $S_{\aA_i|x}\ket{\xi_i}=\ket{\xi_i}$ for all $\aA_i$ and $x$. Note that \ref{yara} is correct in case $\tr_{B_1\cdots B_i}\ketbra{\psi'}{\psi'}$ is full rank. We make assumption that this is indeed the case, which is a standard assumption in the self-testing scenarios.  Since $U_{\rA\rA_1\cdots\rA_i}$ preserves the identity the condition
\begin{equation}
\label{brienne}
\sum_{\aA_i}S_{\aA_i|x}^{\rA}\tp {\M'}_{\aA_i|x}^{\rA_1\cdots\rA_i} = \idd^{\rA}\tp\idd^{\rA_1\cdots\rA_i}
\end{equation}
must be satisfied. Since $\sum_{\aA_i}\M'_{\aA_i|x}= \idd$, the condition is satisfied if and only if $S_{\aA_i|x} = \idd$ for all $\aA_i$ and $x$. This can be seen through the simple reasoning. $U_{i,\rA}\equiv U_{\rA\rA_1\cdots\rA_i}$ is unitary and thus $\idd \geq S_{\aA_i|x}^{\rA}\tp {\M'}_{\aA_i|x}^{\rA_1\cdots\rA_i} \geq 0$. Hence, for arbitrary quantum states ${\rho}^{\rA}$ and ${\tau}^{\rA_1\cdots\rA_i}$  it holds $\tr(S_{\aA_i|x}^{\rA}\rho) = \alpha_i$ and $\tr(\M'_{\aA_i|x}\tau) = \beta_i$ where $0 \leq \alpha_i,\beta_i \leq 1$. Eq. \eqref{brienne} implies
$\sum_{i}\alpha_i\beta_i = 1$ and the completeness of $\M'_{\aA_i|x}$ implies $\sum_{i}\beta_i = 1$. Thus, it also holds $\sum_{i}(1-\alpha_i)\beta_i = 0$. The sum of nonnegative numbers is equal to zero if and only if each of them is equal to zero. Since the argumentation must hold for all states it implies $S_{\aA_i|x} = \idd$ for all $\aA_i$ and $x$. Eq. \eqref{yara} reduces to
\begin{equation}\label{euron}
    U_{\rA\rA_1\cdots\rA_i}\M_{\aA_i|x}\tp\idd^{\rA_1\cdots\rA_{i}}U_{\rA\rA_1\cdots\rA_i}^\dagger = \idd^{\rA}\tp {\M'}_{\aA_i|x}^{\rA_1\cdots\rA_i}
\end{equation}
for all $\aA_i$ and $x$.  Analogous conclusion can be obtained for Bob's operators
\begin{equation}\label{belon}
    U_{\rB\rB_1\cdots\rB_i}\N_{\bB_i|x}\tp\idd^{\rB_1\cdots\rB_{i}}U_{\rB\rB_1\cdots\rB_i}^\dagger = \idd^{\rB}\tp {\N'}_{\bB_i|x}^{\rB_1\cdots\rB_i}
\end{equation}
 for all $\bB_i$ and $y$. Note now that ${\M}_{\aA_{i}|x} = \sum_{a_{i+1}} {\M}_{\aA_i,a_{i+1}|x}$, where $\aA_i,a_{i+1}$ is just a different way to write $\aA_{i+1}$, where $(i+1)$-th output is separately written, as to be able to keep track of it. Similar eq. holds for Bob and thus eqs. \eqref{euron} and \eqref{belon} imply:
 \begin{align}\label{kl}
     U_{\rA\rA_1\cdots\rA_i}{\M}_{\aA_i,a_{i+1}|x}^{\rA}\tp\idd^{\rA_1\cdots\rA_i}U_{\rA\rA_1\cdots\rA_i}^\dagger &= \K_{\aA_i,a_{i+1}|x}^{\rA}\tp {\M'}_{\aA_i|x}^{\rA_1\cdots\rA_i}, \\ \label{kl2} U_{\rB\rB_1\cdots\rB_i}{\N}_{\bB_i,b_{i+1}|y}^{\rB}\tp\idd^{\rB_1\cdots\rB_i}U_{\rB\rB_1\cdots\rB_i}^\dagger &= \Ll_{\bB_i,b_{i+1}|y}^{\rB}\tp {\N'}_{\bB_i|y}^{\rB_1\cdots\rB_i},
 \end{align}
where operators $\K_{\aA_i,a_{i+1}|x}$ and $\Ll_{\bB_i,b_{i+1}}$ satisfy
\begin{equation}\label{completeness}
    \sum_{a_{i+1}}\K_{\aA_i,a_{i+1}|x} = \idd, \qquad \sum_{b_{i+1}}\Ll_{\bB_i,b_{i+1}|y} = \idd.
\end{equation}
We prove that the operators on the r.h.s. of equalities \eqref{kl} and \eqref{kl2} have the tensor product structure by reductio ab absurdum. Assume $U_{\rA\rA_1\cdots\rA_i}{\M}_{\aA_i,a_{i+1}|x}^{\rA}\tp\idd^{\rA_1\cdots\rA_i}U_{\rA\rA_1\cdots\rA_i}^\dagger \equiv \tilde{K}_{\mathbf{a}_{i+1}}^{\rA\rA_1\cdots\rA_i}$ are entangled across the bipartition $\rA|\rA_1\cdots\rA_i$. Then the following equation holds:
\begin{equation}\label{covid}
    \sum_{a_{i+1}}\tilde{K}_{\mathbf{a}_{i+1}}^{\rA\rA_1\cdots\rA_i} = \idd^{\rA}\tp {\M'}_{\aA_i|x}^{\rA_1\cdots\rA_i}
\end{equation}
Let us denote $\tr_{\rA}\left[\tilde{K}_{\mathbf{a}_{i+1}}^{\rA\rA_1\cdots\rA_i}\right] = k_{a_{i+1}}\bar{K}_{\mathbf{a}_{i+1}}^{\rA_1\cdots\rA_i}$, where $\bar{K}_{\mathbf{a}_{i+1}}^{\rA_1\cdots\rA_i} \geq 0$. Given that $\tilde{K}_{\mathbf{a}_{i+1}}^{\rA\rA_1\cdots\rA_i}$ is positive for all $\mathbf{a}_{i+1}$, we have that $k_{a_{i+1}} \geq 0$ for all $a_{i+1}$. By tracing out the Hilbert space $\rA$ in eq. \eqref{covid} we obtain
\begin{equation}\label{virus}
    \sum_{a_{i+1}}\frac{k_{a_{i+1}}}{D}\bar{K}_{\mathbf{a}_{i+1}}^{\rA_1\cdots\rA_i} = {\M'}_{\aA_i|x}^{\rA_1\cdots\rA_i},
\end{equation}
where $D$ is the dimension of Hilbert space $\mathcal{H}_\rA$. Since ${\M'}_{\aA_i|x}^{\rA_1\cdots\rA_i}$ is, by assumption, a tensor product of rank-one projectors, eq. \eqref{virus} implies $\bar{K}_{\mathbf{a}_{i+1}}^{\rA_1\cdots\rA_i} = {\M'}_{\aA_i|x}^{\rA_1\cdots\rA_i}$ for all $\aA_i$.  Henceforth, we have $\tr_{\rA}\left[\tilde{K}_{\mathbf{a}_{i+1}}^{\rA\rA_1\cdots\rA_i}\right] = k_{a_{i+1}}{\M'}_{\mathbf{a}_{i+1}}^{\rA_1\cdots\rA_i}$, and as ${\M'}_{\mathbf{a}_{i+1}}^{\rA_1\cdots\rA_i}$ is a rank-one projector, there cannot be entanglement in $\tilde{K}_{\mathbf{a}_{i+1}}^{\rA\rA_1\cdots\rA_i}$ across the bipartition $\rA|\rA_1\cdots\rA_i$, which justifies the form of eq. \eqref{kl}. Similar argument holds for \eqref{kl2}.

Let us now write down the expression for the conditional Bell value 
\begin{equation}\label{jedanCov}
 \mathcal{I}_{\aA_i,\bB_i}^{i+1} = \sum_{a_{i+1}b_{i+1}xy}\frac{b_{a_{i+1}b_{i+1}}^{xy}}{p(\aA_i,\bB_i|xy)}\tr\left[\left({\M}_{\aA_i,a_{i+1}|x}\otimes{\N}_{\bB_i,b_{i+1}|y}\right)\rho_{\rA\rB}\tp\ketbra{00}{00}^{\rA_1\rB_1\cdots\rA_i\rB_i}\right],
\end{equation}
where $b_{a_{i+1}b_{i+1}}^{xy}$ are the coefficients from the original Bell inequality corresponding to the probability $p(a_{i+1}b_{i+1}|xy)$.
Given the self-testing statements \eqref{i-1state},\eqref{kl},\eqref{kl2} it can be rewritten in the following way
\begin{align}\label{dvaCov}
 \mathcal{I}_{\aA_i,\bB_i}^{i+1} = \sum_{a_{i+1}b_{i+1}xy}\frac{b_{a_{i+1}b_{i+1}}^{xy}}{p(\aA_i,\bB_i|xy)}\tr\left[\left({\K}_{\aA_i,a_{i+1}|x}\otimes{\Ll}_{\bB_ib_{i+1}|y}\right)\proj{\xi_i}^{\ab}\bigotimes_{j=1}^i\left(\left({\M'}_{\aA_i|x}\otimes{\N'}_{\bB_i|y}\right)\ketbra{\psi'}{\psi'}^{\rA_j\rB_j}\right)\right].
\end{align}
%
The expression can be further simplified
\begin{align}\label{sedam}
\mathcal{I}_{\aA_i,\bB_i}^{i+1}  
 &= \sum_{a_{i+1}b_{i+1}xy}\frac{b_{a_{i+1}b_{i+1}}^{xy}}{p(\aA_i,\bB_i|xy)}\tr\left[\left({\K}_{\aA_i,a_{i+1}|x}\otimes{\Ll}_{\bB_i,b_{i+1}|y}\right)\proj{\xi_i}^{\ab}\right]\prod_{j=1}^i\tr\left[\left({\M'}_{a_j|x}\otimes{\N'}_{b_j|y}\right)\ketbra{\psi'}{\psi'}^{\rA_j\rB_j}\right] \\ \label{osam}
&= \sum_{a_{i+1}b_{i+1}xy}b_{a_{i+1}b_{i+1}}^{xy}\tr\left[\left({\K}_{\aA_i,a_{i+1}|x}\otimes{\Ll}_{\bB_i,b_{i+1}|y}\right)\proj{\xi_i}^{\ab}\right].
\end{align}
The equalities come from the property of trace $\tr(A\tp B) =\tr(A)\tr(B)$ and observing that $p(\aA_i,\bB_i|xy) = \prod_{j=1}^i\tr\left[\left({\M'}_{a_j|x}\otimes{\N'}_{b_j|y}\right)\ketbra{\psi'}{\psi'}^{\rA_j\rB_j}\right] $. Let us now sum different conditional Bell values
\begin{align}\nonumber
\sum_{\aA_i\bB_i}\mathcal{I}_{\aA_i\bB_i}^{i+1} &=  \sum_{\aA_i\bB_ia_{i+1}b_{i+1}xy}b_{a_{i+1}b_{i+1}}^{xy}\tr\left[\left({\K}_{\aA_i,a_{i+1}|x}\otimes{\Ll}_{\bB_i,b_{i+1}|y}\right)\proj{\xi_i}^{\ab}\right] \\ \label{devet} &= o^{2i}\sum_{a_{i+1}b_{i+1}xy}b_{a_{i+1}b_{i+1}}^{xy}\tr\left[\left({\K}^{(i+1)}_{a_{i+1}|x}\otimes{\Ll}^{(i+1)}_{b_{i+1}|y}\right)\proj{\xi_i}^{\ab}\right],
\end{align}
where we introduced new operators
\begin{equation}\label{doublecompleteness}
    o^i{\K}^{(i+1)}_{a_{i+1}|x} = \sum_{\aA_i}{\K}_{\aA_i,a_{i+1}|x}, \qquad 
    o^i{\Ll}^{(i+1)}_{b_{i+1}|y} = \sum_{\bB_i}{\Ll}_{\bB_i,b_{i+1}|y}.
\end{equation}
Note that ${\K}^{(i+1)}_{a_{i+1}|x}$ and ${\Ll}^{(i+1)}_{b_{i+1}|y}$ are positive and satisfy the completeness relations $\sum_{a_{i+1}}{\K}^{(i+1)}_{a_{i+1}|x} = \idd$ and $\sum_{b_{i+1}}{\Ll}^{(i+1)}_{b_{i+1}|y} = \idd$ (see eq. \eqref{completeness} ). Hence they represent valid quantum measurements. The condition for self-testing is $\mathcal{J}^i = \beta$ for all $i$, which through the definition of $\mathcal{J}^i$ given in eq. \eqref{definitionIJ} implies  $\sum_{\aA_i\bB_i}\mathcal{I}_{\aA_i\bB_i}^{i+1} = o^{2i}\beta$, or equivalently
\begin{equation}\label{bell2}
    \sum_{a_{i+1}b_{i+1}xy}b_{a_{i+1}b_{i+1}}^{xy}\tr\left[\left({\K}^{(i+1)}_{a_{i+1}|x}\otimes{\Ll}^{(i+1)}_{b_{i+1}|y}\right)\proj{\xi_i}^{\ab}\right] = \beta.
\end{equation}
Since the Bell inequality is self-testing the reference experiment the eq. \eqref{bell2} implies the existence of the local unitary transformation $U'_{i+1} = U'_{\rA\rA_{i+1}}\tp U'_{\rB\rB_{i+1}}\tp\idd_\rP$ such that
\begin{align}\label{eq:state2aux}
    U'_{i+1}\ket{\xi_i}^{\ab}\tp\ket{00}^{\rA_{i+1}\rB_{i+1}} &= \ket{\xi_{i+1}}^{\ab}\tp\ket{\psi'}^{\rA_{i+1}\rB_{i+1}}\\ \label{eq:mnstate2aux}
    U'_{i+1}\left[{\K}_{a_{i+1}|x}^{(i+1)}\tp{\Ll}_{b_{i+1}|y}^{(i+1)}\ket{\xi_i}^{\ab}\tp\ket{00}^{\rA_{i+1}\rB_{i+1}}\right] &= \ket{\xi_{i+1}}^{\ab}\tp\left({\M'}_{a_{i+1}|x}^{\rA_{i+1}}\tp{\N'}_{b_{i+1}|y}^{\rB_{i+1}}\right)\ket{\psi'}^{\rA_{i+1}\rB_{i+1}}
\end{align}
Combining eqs \eqref{eq:state2aux}-\eqref{eq:mnstate2aux} with \eqref{i-1state}-\eqref{i-1stateandmeas} leads to the parallel self-testing of $\bigotimes_{j=1}^{i+1}\ket{\psi'}$:
\begin{align}\label{eq:state2}
    U'_{i+1}\left[U_i\left[\ket{\psi}^{\ab}\tp\ket{00}^{\rA_1\rB_1\cdots\rA_i\rB_i}\right]\tp \ket{00}^{\rA_{i+1}\rB_{i+1}}\right] &= \ket{\xi_{i+1}}^{\ab}\bigotimes_{j=1}^{i+1}\ket{\psi'}^{\rA_j\rB_j}\\ \label{eq:mnstate2aux2}
    U'_{i+1}\left[U_i\left[{\M}_{\aA_i|x}\tp{\N}_{\bB_i|y}\ket{\psi}^{\ab}\tp\ket{00}^{\rA_1\rB_1\cdots\rA_i\rB_i}\right]\tp\ket{00}^{\rA_{i+1}\rB_{i+1}}\right] &= \ket{\xi_{i+1}}^{\ab}\tp\left({\M'}_{\aA_i|x}\tp{\N'}_{\bB_i|y}\bigotimes_{j=1}^{i}\ket{\psi'}^{\rA_j\rB_j}\right)\tp\ket{\psi'}^{\rA_{i+1}\rB_{i+1}}
\end{align}

Now the aim is to self-test the action of measurements $\M_{\aA_{i+1}|x}$ and $\N_{\bB_{i+1}|y}$. Let us verify how the isometry $U'_{i+1}\circ U_i$ acts on these measurements. By using eqs. \eqref{i-1state}, \eqref{i-1stateandmeas},\eqref{kl} and \eqref{kl2} we obtain:
\begin{align}\label{eq:mnstate2aux3}
    &U'_{i+1}\left[U_i\left[{\M}_{\aA_{i+1}|x}\tp{\N}_{\bB_{i+1}|y}\ket{\psi}^{\ab}\tp\ket{00}^{\rA_1\rB_1\cdots\rA_i\rB_i}\right]\tp\ket{00}^{\rA_{i+1}\rB_{i+1}}\right] = \\ \nonumber &\qquad = U'_{i+1}\left[\left(\K_{\aA_{i+1}|x}\tp\Ll_{\bB_{i+1}|y}\ket{\xi_i}^{\rA\rB}\right)\tp\left({\M'}_{\aA_i|x}\tp{\N'}_{\bB_i|y}\bigotimes_{j=1}^i\ket{\psi'}^{\rA_j\rB_j}\right)\tp\ket{00}^{\rA_{i+1}\rB_{i+1}}\right]
\end{align}
Eq. \eqref{eq:mnstate2aux} implies $U'_{\rA\rA_{i+1}}\left[\K_{a_{i+1}|x}^{(i+1)}\tp\idd^{\rA_{i+1}}\right]{U'}_{\rA\rA_{i+1}}^\dagger = \idd^{\rA}\tp {\M'}_{a_{i+1}|x}^{\rA_{i+1}}$, similarly to the derivation of eq. \eqref{euron}. Following the procedure used to prove eq. \eqref{kl}, this implies $U'_{\rA\rA_{i+1}}\left[\K_{\aA_{i+1}|x}\tp\idd^{\rA_{i+1}}\right]{U'}_{\rA\rA_{i+1}}^\dagger = {\bar{\K}}_{\aA_{i+1}|x}^{\rA}\tp {\M'}_{a_{i+1}|x}^{\rA_{i+1}}$ and $U'_{\rB\rB_{i+1}}\left[\Ll_{\bB_{i+1}|y}\tp\idd^{\rB_{i+1}}\right]{U'}_{\rB\rB_{i+1}}^\dagger = {\bar{\Ll}}_{\bB_{i+1}|y}^{\rB}\tp {\N'}_{b_{i+1}|y}^{\rB_{i+1}}$, where $\bar{K}_{\aA_{i+1}|x}$ and ${\bar{\Ll}}_{\bB_{i+1}|y}$ are positive operators. The eq. \eqref{eq:mnstate2aux2} can be rewritten as
\begin{align}\nonumber
     U'_{i+1}\circ U_i&\left[{\M}_{\aA_i|x}\tp{\N}_{\bB_i|y}\ket{\psi}^{\ab}\tp\ket{00}^{\rA_1\rB_1\cdots\rA_{i+1}\rB_{i+1}}\right] = U'_{i+1}\circ U_i\left[\left(\sum_{a_{i+1}}{\M}_{\aA_ia_{i+1}|x}\tp\sum_{b_{i+1}}{\N}_{\bB_ib_{i+1}|y}\ket{\psi}^{\ab}\right)\tp\ket{00}^{\rA_1\rB_1\cdots\rA_{i+1}\rB_{i+1}}\right] \\ \nonumber
     &= U'_{i+1}\left[\left(\sum_{a_{i+1}}{\K}_{\aA_ia_{i+1}|x}\tp\sum_{b_{i+1}}{\Ll}_{\bB_ib_{i+1}|y}\ket{\xi_i}^{\ab}\right)\tp\left({\M'}_{\aA_i|x}\tp{\N'}_{\bB_i|y}\bigotimes_{j=1}^i\ket{\psi'}^{\rA_j\rB_j}\right)\tp\ket{00}^{\rA_{i+1}\rB_{i+1}}\right] \\ \nonumber
     &= \sum_{a_{i+1},b_{i+1}}\left(\bar{\K}_{\aA_i,a_{i+1}|x}\tp\bar{\Ll}_{\bB_i,b_{i+1}|y}\ket{\xi_{i+1}}^{\ab}\right)\tp\left({\M'}_{\aA_i|x}\tp{\N'}_{\bB_i|y}\bigotimes_{j=1}^i\ket{\psi'}^{\rA_j\rB_j}\right)\tp\left({\M'}_{a_{i+1}|x}\tp{\N'}_{b_{i+1}|y}\ket{\psi'}^{\rA_{i+1}\rB_{i+1}}\right)
\end{align}
The second equality is the consequence of eqs. \eqref{i-1state}, \eqref{kl} and \eqref{kl2}, while the third equality comes from $U'_{\rA\rA_{i+1}}[\K_{{\aA}_{i+1}|x} \otimes 1^{\rA_{i+1}}]{U'}_{\rA\rA_{i+1}}^\dagger = \bar{\K}^\rA_{\aA_{i+1}|x} \otimes {\M'}_{a_{i+1}|x}^{\rA_{i+1}}$ and analogously for $\Ll_{{\bB}_{i+1}|y}$. The final equation is equivalent to \eqref{eq:mnstate2aux2} if and only if $\bar{K}_{\aA_i,a_{i+1}|x} = \bar{\Ll}_{\bB_i,b_{i+1}|y} = \idd$ for all $\aA_{i+1},\bB_{i+1},x,y$. This can be proven by using the same argumentation used after eq. \eqref{brienne} given that  $0 \leq \bar{K}_{\aA_i,a_{i+1}|x}, \bar{\Ll}_{\bB_i,b_{i+1}|y} \leq 1$ and $\sum_{a_{i+1}}{\M'}_{a_{i+1}|x} = \idd, \quad  \sum_{b_{i+1}}{\N'}_{b_{i+1}|y} = \idd$.

Hence, by denoting $U_{i+1} = U'_{i+1}\circ U_i$, we reduce eq. \eqref{eq:mnstate2aux3} to
\begin{align}\label{eq:mnstate2auxFin}
    U_{i+1}\left[{\M}_{\aA_{i+1}|x}\tp{\N}_{\bB_{i+1}|y}\ket{\psi}^{\ab}\tp\ket{00}^{\rA_1\rB_1\cdots\rA_{i+1}\rB_{i+1}}\right] = \ket{\xi_{i+1}}^{\ab}\tp\left(\bigotimes_{j=1}^{i+1}{\M'}_{a_j|x}\tp{\N'}_{b_j|y}\ket{\psi'}^{\rA_j\rB_j}\right)
\end{align}
With this we have proved the inductive step, and with it completed the theorem proof.

\renewcommand{\theequation}{B\arabic{equation}}

\section{Example: parallel self-testing beyond Bell inequalities}\label{appexample}

In this section we give example of lifting the self-testing protocol which is based not on the maximal violation of a Bell inequality but reproduction of the whole set of correlations. For the sake of simplicity we chose the self-testing protocol in the simplest scenario with two parties, each performing two binary measurements ((2,2,2) scenario). The self-testing correlations are
\begin{align}\label{corr0}
    \bra{\psi}\rA_0\tp\rB_0\ket{\psi} = \cos{\gamma}, &\qquad  \bra{\psi}\rA_0\tp\rB_1\ket{\psi} = -\cos{\delta}, \\ \label{corr1}
    \bra{\psi}\rA_1\tp\rB_0\ket{\psi} = \sin{\gamma}, &\qquad  \bra{\psi}\rA_1\tp\rB_1\ket{\psi} = \sin{\delta},
\end{align}
for $\gamma \neq \delta$ and $\gamma,\delta \in (0,\pi/4]$. $\rA_i$ and $\rB_j$ are observables defined as $\rA_i = {\M}_{0|i}-{\M}_{1|i}$ and $\rB_i = {\N}_{0|i}-{\N}_{1|i}$. In \cite{1367-2630-18-2-025021} it is proven that this set of correlations self-tests the maximally entangled pair of qubits. The reference measurement observables are
\begin{align*}
    \rA'_0 = \sz, &\qquad \rA'_1 = \sx,\\
    \rB'_0 = \cos{\gamma}\sz+\sin{\gamma}\sx, &\qquad \rB'_1 = \cos{\delta}\sz-\sin{\delta}\sx.
\end{align*}
We omit the self-testing proof here and direct reader's attention to \cite{1367-2630-18-2-025021}. The isometry used in the proof is the Swap isometry $U$ and physical experiment reproducing correlations \eqref{corr0}-\eqref{corr1} satisfies the following equations
\begin{align}
    U\left[\ket{\psi}^{\ab}\tp\ket{00}^{\rA_1\rB_1}\right] &= \ket{\xi}^{\ab}\tp\ket{\phi^+}^{\rA_1\rB_1}, \\
    U\left[(\M_{a|x}\tp\N_{b|y}\ket{\psi}^{\ab})\tp\ket{00}^{\rA_1\rB_1}\right] &= \ket{\xi}^{\ab}\tp(\M'_{a|x}\tp\N'_{b|y}\ket{\phi^+}^{\rA_1\rB_1}).
\end{align}

Note that one might show that correlations \eqref{corr0}-\eqref{corr1} maximally violate some Bell inequality and the procedure corresponding to the Theorem 1 can be applied to build a self-testing protocol for a tensor product of $n$ maximally entangled qubit pairs. However, we still find it useful to show how to deal with a self-testing protocol based on the reproduction of the whole set of correlations. In $(2,2,2)$ case one can use standard methods to find the Bell inequality maximally violated by some extremal point of the set of quantum correlations (for example NPA hierarchy methods \cite{npa1,npa2}), but in more complicated scenarios this might not be an easy task. Furthermore, most of the known self-testing protocols for multipartite states are not based on the maximal Bell inequality violation. 

In what follows we show how to use the above given self-test to build another self-test, using the same number of measurement choices,  certifying $\ket{\phi^+}\tp\ket{\phi^+}$. Let us introduce the following notation
\begin{align}
\rA_x^{(k)} = \sum_{\aA}(-1)^{a_k}\M_{\aA|x}, &\qquad \rB_y^{(k)} = \sum_{\bB}(-1)^{b_k}\N_{\bB|y}, \qquad \textrm{for}\quad k = 1,2.
\end{align}
The condition for self-testing in this case is reproduction of the following correlations:
\begin{align}\label{corr00}
    &\bra{\psi}\rA_0^{(1)}\tp\rB_0^{(1)}\ket{\psi} = \cos{\gamma}, \qquad  \bra{\psi}\rA^{(1)}_0\tp\rB^{(1)}_1\ket{\psi} = -\cos{\delta}, \\ \label{corr11}
    &\bra{\psi}\rA^{(1)}_1\tp\rB^{(1)}_0\ket{\psi} = \sin{\gamma}, \qquad  \bra{\psi}\rA^{(1)}_1\tp\rB^{(1)}_1\ket{\psi} = \sin{\delta}\\ \label{corr22}
    \frac{1}{p(a_1=a,b_1=b|00)}&\bra{\psi}({\M}^{(1)}_{a|0}\tp{\N}^{(1)}_{b|0})(\rA^{(2)}_0\tp\rB^{(2)}_0)({\M}^{(1)}_{a|0}\tp{\N}^{(1)}_{b|0})\ket{\psi} = \cos{\gamma}, \qquad \forall a,b \\ \label{corr22b} \frac{1}{p(a_1=a,b_1=b|01)}&\bra{\psi}({\M}^{(1)}_{a|0}\tp{\N}^{(1)}_{b|1})(\rA^{(2)}_0\tp\rB^{(2)}_1)({\M}^{(1)}_{a|0}\tp{\N}^{(1)}_{b|1})\ket{\psi} = -\cos{\delta}, \qquad \forall a,b\\ \label{corr33}
    \frac{1}{p(a_1=a,b_1=b|10)}&\bra{\psi}({\M}^{(1)}_{a|1}\tp{\N}^{(1)}_{b|0})(\rA^{(2)}_1\tp\rB^{(2)}_0)({\M}^{(1)}_{a|1}\tp{\N}^{(1)}_{b|0})\ket{\psi} = \sin{\gamma}, \qquad \forall a,b \\ \label{corr33b} \frac{1}{p(a_1 = a,b_1=b|11)}&\bra{\psi}({\M}^{(1)}_{a|1}\tp{\N}^{(1)}_{b|1})(\rA^{(2)}_1\tp\rB^{(2)}_1)({\M}^{(1)}_{a|1}\tp{\N}^{(1)}_{b|1})\ket{\psi} = \sin{\delta}, \qquad \forall a,b .
\end{align}
In terms just of the observed probabilities the set of conditions \eqref{corr00}-\eqref{corr33b} can be written as
\begin{align}
    p(a_1=b_1|00) - p(a_1\neq b_1|00) = \cos\gamma, &\qquad p(a_1=b_1|01) - p(a_1\neq b_1|01) = -\cos\delta,\\
    p(a_1=b_1|10) - p(a_1\neq b_1|10) = \sin\gamma, &\qquad p(a_1=b_1|11) - p(a_1\neq b_1|11) = \cos\delta,\\
(p(a_2=b_2|x=0,y=0,a_1=a,b_1 = b) &- p(a_2\neq b_2|x=0,y=0,a_1=a,b_1 = b)) = \cos\gamma, \qquad \forall a,b\\
   (p(a_2=b_2|x=0,y=1,a_1=a,b_1 = b) &- p(a_2\neq b_2|x=0,y=1,a_1=a,b_1 = b)) = -\cos\delta, \qquad \forall a,b\\
(p(a_2=b_2|x=1,y=0,a_1=a,b_1 = b) &- p(a_2\neq b_2|x=1,y=0,a_1=a,b_1 = b)) = \sin\gamma, \qquad \forall a,b\\
   (p(a_2=b_2|x=1,y=1,a_1=a,b_1 = b) &- p(a_2\neq b_2|x=1,y=1,a_1=a,b_1 = b)) = \sin \delta, \qquad \forall a,b.
\end{align}

The proof goes along the same line as the proof of Theorem 1. Equations \eqref{corr00}-\eqref{corr11} imply the existence of the isometry $U_1 = U_{\rA\rA_1}\tp U_{\rB\rB_1}\tp\idd_{\rP}$ such that
\begin{align}\label{corrstate1}
    U_1\left(\ket{\psi}^{\ab}\tp\ket{00}^{\rA_1\rB_1}\right) &= \ket{\xi_1}^{\ab}\tp\ket{\phi^+}^{\rA_1\rB_1}, \\ \label{corrmnstate1}
    U_1\left(\left(\M^{(1)}_{a|x}\tp\N^{(1)}_{b|y}\ket{\psi}^{\ab}\right)\tp\ket{00}^{\rA_1\rB_1}\right) &= \ket{\xi_1}^{\ab}\tp \left(\M'_{a|x}\tp\N'_{b|y}\ket{\phi^+}^{\rA_1\rB_1}\right).
\end{align}
These two equations imply the following set of equations
\begin{align}
    U_{\rA\rA_1}\left(\M^{(1)}_{a|x}\tp \openone^{\rA_1}\right)U_{\rA\rA_1}^\dagger = \openone^{\rA}\tp{\M'}_{a|x}^{\rA_1}, &\qquad U_{\rB\rB_1}\left(\N^{(1)}_{b|y}\tp \openone^{\rB_1}\right)U_{\rB\rB_1}^\dagger = \openone^{\rB}\tp{\N'}_{b|y}^{\rB_1},\\
    U_{\rA\rA_1}\left(\M_{a_1,a_2|x}\tp \openone^{\rA_1}\right)U_{\rA\rA_1}^\dagger = {\K}_{a_1,a_2|x}^{\rA}\tp{\M'}_{a_1|x}^{\rA_1}, &\qquad U_{\rB\rB_1}\left({\N}_{b_1,b_2|y}\tp \openone^{\rB_1}\right)U_{\rB\rB_1}^\dagger = {\Ll}_{b_1,b_2|y}^{\rB}\tp{\N'}_{b_1|y}^{\rB_1},
\end{align}
where the operators ${\K}_{a_1,a_2|x}$, ${\Ll}_{b_1,b_2|y}$ are positive semidefinite and satisfy
\begin{equation}\label{complet}
    \sum_{a_2}{\K}_{a_1,a_2|x} = \openone , \qquad \sum_{b_2}{\Ll}_{b_1,b_2|x} = \openone
\end{equation}
Given all these equations the first  expression from \eqref{corr22} can be rewritten as
\begin{align}\nonumber
    \frac{1}{p(a_1=a,b_1=b|00)}& \tr\left[(\rA^{(2)}_0\tp\rB^{(2)}_0)({\M}^{(1)}_{a|0}\tp{\N}^{(1)}_{b|0}\proj{\psi}{\M}^{(1)}_{a|0}\tp{\N}^{(1)}_{b|0})\right] = \\ &=  \frac{1}{p(a_1=a,b_1=b|00)}\tr\left[U\left(\rA^{(2)}_0\tp\rB^{(2)}_0\right)U^\dagger\proj{\xi_1}^{\ab}\tp(\M'_{a|0}\tp\N'_{b|0}\proj{\phi^+}^{\rA_1\rB_1}\M'_{a|0}\tp\N'_{b|0})\right]\\ \nonumber
    &= \frac{1}{p_{1}(a_1=a,b_1=b|00)}\sum_{a_2b_2}(-1)^{a_2+b_2}\tr\left[(\K_{a,a_2|0}\tp\Ll_{b,b_2|0})\proj{\xi_1}^{\ab}\tp(\M'_{a|0}\tp\N'_{b|0}\proj{\phi^+}^{\rA_1\rB_1}\M'_{a|0}\tp\N'_{b|0}\right] \\ \nonumber
    &= \frac{1}{p(a_1=a,b_1=b|00)}\sum_{a_2b_2}(-1)^{a_2+b_2}\tr\left[(\K_{a,a_2|0}\tp\Ll_{b,b_2|0})\proj{\xi_1}^{\ab}\right]\tr\left[(\M'_{a|0}\tp\N'_{b|0})\proj{\phi^+}^{\rA_1\rB_1}\right] \\ \label{00}
    &= \sum_{a_2b_2}(-1)^{a_2+b_2}\tr\left[(\K_{a,a_2|0}\tp\Ll_{b,b_2|0})\proj{\xi_1}^{\ab}\right] = \cos{\gamma}.
\end{align}
This equation holds for all $a,b\in{0,1}$. Analogous equations can be obtain starting from the other three relations from \eqref{corr22}-\eqref{corr33}:
\begin{align}\label{01}
    \sum_{a_2b_2}(-1)^{a_2+b_2}\tr\left[(\K_{a,a_2|0}\tp\Ll_{b,b_2|1})\proj{\xi_1}^{\ab}\right] &= -\cos{\delta}\\ \label{10}
    \sum_{a_2b_2}(-1)^{a_2+b_2}\tr\left[(\K_{a,a_2|1}\tp\Ll_{b,b_2|0})\proj{\xi_1}^{\ab}\right] &= \sin{\gamma}\\ \label{11}
    \sum_{a_2b_2}(-1)^{a_2+b_2}\tr\left[(\K_{a,a_2|1}\tp\Ll_{b,b_2|1})\proj{\xi_1}^{\ab}\right] &= \sin{\delta}.
\end{align}
Let us now introduce new operators
\begin{equation}
    2\bar{\rA}_{x} = \sum_{a_1,a_2}(-1)^{a_2}\K_{a_1,a_2|x}, \qquad 2\bar{\rB}_y = \sum_{b_1,b_2}(-1)^{b_2}\Ll_{b_1,b_2|y}
\end{equation}
which are valid measurements observables (cf. \eqref{complet}). Since eqs. \eqref{00}, \eqref{01}, \eqref{10} and \eqref{11} hold for all values of $a$ and $b$ by summing over all the different values we obtain
\begin{align}\label{corrbar00}
    \bra{\xi_1}\bar{\rA}_0\tp\bar{\rB}_0\ket{\xi_1} = \cos{\gamma}, &\qquad  \bra{\xi_1}\bar{\rA}_0\tp\bar{\rB}_1\ket{\xi_1} = -\cos{\delta}, \\ \label{corrbar11}
    \bra{\xi_1}\bar{\rA}_1\tp\bar{\rB}_0\ket{\xi_1} = \sin{\gamma}, &\qquad  \bra{\xi_1}\bar{\rA}_1\tp\bar{\rB}_1\ket{\xi_1} = \sin{\delta}
\end{align}
These relations are exactly self-testing ones and they imply the existence of a local unitary $U_2 = U_{\rA\rA_2}\tp U_{\rB\rB_2}\tp\idd_{\rP}$ such that
\begin{equation}
    U_2\left[\ket{\xi_1}^{\ab}\tp\ket{00}^{\rA_2\rB_2}\right] = \ket{\xi}^{\ab}\tp\ket{\phi^+}^{\rA_2\rB_2}
\end{equation}
Combining this equation with \eqref{corrstate1} we obtain
\begin{equation}
    U_2\circ U_1\left[\ket{\psi}^{\ab}\tp\ket{0000}^{\rA_1\rA_2\rB_1\rB_2}\right] = \ket{\xi}^{\ab}\tp\ket{\phi^+}^{\rA_1\rB_1}\tp\ket{\phi^+}^{\rA_2\rB_2}.
\end{equation}
The proof for self-testing of measurements can be done in the same way as it is done in the proof of Theorem 1.

\renewcommand{\theequation}{C\arabic{equation}}

\section{Combining self-testing protocols to test a tensor product of different quantum states}\label{comb}


Let us introduce the notion of compatible self-testing protocols, as those using the same number of inputs to self-tests the corresponding reference states. Here we outline how one can build a self-testing protocol certifying a tensor product of $n$ different states, which can be independently self-tested by using compatible self-testing protocols.

Namely, the aim is to self-test a state of the form $\bigotimes_{i}\ket{\psi'_i}$, knowing that every $\ket{\psi'_i}$ can be self-tested through observing the maximal violation $\beta_{i}$ of the inequality 
\begin{equation}
    \mathcal{I}_i \equiv \sum_{a_i,b_i = 0}^{o_i-1}\sum_{x,y = 0}^{m-1}b_{i,a_ib_i}^{xy}p_i(a_ib_i|xy) \leq \beta_{i}
\end{equation}
Let us introduce the generalized conditional Bell value:

\begin{align}
 \mathcal{I}_{\aA_i\bB_i}^{i+1} =
 \sum_{a_{i+1}b_{i+1}xy}\frac{b_{i+1,a_{i+1}b_{i+1}}^{xy}}{p(\aA_i\bB_i|xy)}\tr\left[\left({\M}_{\aA_i,a_{i+1}|x}\otimes{\N}_{\bB_i,b_{i+1}|y}\right)\rho_{\rA\rB}\tp\ketbra{00}{00}^{\rA_1\rB_1\cdots\rA_i\rB_i}\right].
\end{align}
The conditions for self-testing $\bigotimes_{i}\ket{\psi'_i}$ by using only $m$ different inputs per party are the following
\begin{itemize}
    \item $\mathcal{I}_1 = \beta_{1}$,
    \item  $ \sum_{\aA_i\bB_i}\mathcal{I}_{\aA_i\bB_i}^{i+1} = \left(\prod_{j = 1}^{i}m_j\right)^{2}\beta_{i+1}$.
\end{itemize}
The self-testing proof goes along the same lines as the proof for Theorem 1. The only difference is that for every $i$ the second condition corresponds to the maximal violation of $\mathcal{I}_i$ by the junk state appearing in the self-testing statement for $i-1$. 

\renewcommand{\theequation}{D\arabic{equation}}

\section{Proof of Theorem \ref{theorem2}}\label{genparallel}

In this section we give a proof of Theorem \ref{theorem2} which deals with combining two or more incompatible self-tests, \ie self-tests corresponding to the scenarios with different inputs and/or output size. The general scenario is as follows: the protocol $\Si{i}$ can be used to self-test the state $\ket{\psi'_i}$ in the scenario where each party has $m_i$ inputs denoted with $x_i,y_i$ and $o_i$ outputs denoted as $a_i,b_i$. If the correlations $p_i(a_ib_i|x_iy_i)$ satisfy the self-testing conditions given by the protocol $\Si{i}$ for each $i$ then there is isometry mapping the physical state to the $\bigotimes_i\ket{\psi'_i}$. This can be seen as a generalization of parallel self-testing, where usually the reference state is an $n$-fold tensor product of some state.

According to the theorem conditions the state $\ket{\psi'_{i}}$, for $i \in \{1,\cdots, n\}$, is self-tested through achieving the maximal violation of the Bell inequality $\mathcal{I}_i$, evaluated as
\begin{equation}
    \mathcal{I}_i \equiv \sum_{a_ib_ix_iy_i}b_{i,a_ib_i}^{x_iy_i}p_i(a_ib_i|x_iy_i) = \beta_i.
\end{equation}
The self-testing scenario is as follows: in every round Alice and Bob receive $n$ classical inputs each, denoted with $\xX = \{x_1,\cdots, x_n\}, \yY = \{y_1,\cdots, y_n\}$, where   $x_i,y_i \in \{0,1,\cdots, m_i-1\}$. They return strings  $\aA = \{a_1,\cdots,a_n\}, \bB= \{b_1,\cdots,b_n\}$, where  $a_i,b_i \in \{0,1,\cdots, o_i-1\}$. The measurement operators are denoted by $\M_{\aA|\xX}$ for Alice and $\N_{\bB|\yY}$ for Bob. Let us introduce the following notation 
\begin{align}
    {\M}_{a_i|x_i} =\frac{1}{ \prod_{j|j\neq i} m_i}\sum_{\aA_{(i)},\xX_{(i)}}\M_{\aA|\xX}, \qquad      {\N}_{b_i|y_i} = \frac{1}{\prod_{j|j\neq i} m_i}\sum_{\bB_{(i)},\yY_{(i)}}\N_{\bB|\yY}\end{align}
These operators are valid measurement operators, as a sum of positive operators they are positive and they satisfy completeness relations $\sum_{a_i}{\M}_{a_i|x_i} = \openone$ and $\sum_{b_j}{\N}_{b_j|y_j} = \openone$, for all $x_i$ and $y_j$. Let us define Bell-like expressions
\begin{equation}\label{svindal}
    \mathcal{I}^i_{\xX_{(i)}\yY_{(i)}} = \sum_{\aA,\bB,x_i,y_i}b_{i,a_ib_i}^{x_iy_i}p_i(\aA\bB|\xX,\yY) 
\end{equation}
The Bell violation for the $i$-th inequality is a normalized sum of values given in \ref{svindal}:
\begin{equation}
 \mathcal{J}^i = \frac{1}{\prod_{j|j\neq i}m_j^2}\sum_{\xX_{(i)},\yY_{(i)}}\mathcal{I}^i_{\xX_{(i)},\yY_{(i)}},  
\end{equation}
Let us set $i=1$. The Bell violation $\mathcal{J}^1$ can be modelled as
\begin{equation}
\mathcal{J}^1 = \sum_{a_1,b_1,x_1,y_1}b_{1,a_1b_1}^{x_1y_1}\bra{\psi}{\M}_{a_1|x_1}\tp{\N}_{b_1|y_1}\ket{\psi}
\end{equation}

If the inequality $\mathcal{J}^1$ is maximally violated  there exist a local unitary $U_{1} = U_{\rA\rA_1}\tp U_{\rB\rB_1}\tp\idd_\rP$ such that
\begin{align}\label{psi1}
    U_{1}\left[\ket{\psi}^{\rA\rB\rP}\tp\ket{00}^{\rA_1\rB_1}\right] &= \ket{\xi_{1}}^{\ab}\tp\ket{\psi'_1}^{\rA_1\rB_1}\\
    U_{1}\left[\left({\M}_{a_1|x_1}\tp{\N}_{b_1|y_1}\tp\openone_\rP\ket{\psi}^{\ab}\right)\tp\ket{00}^{\rA_1\rB_1}\right] &= \ket{\xi_{1}}^{\ab}\tp\left(\M'_{a_1|x_1}\tp\N'_{b_1|y_1}\ket{\psi'_1}^{\rA_1\rB_1}\right).
\end{align}
This implies the following relations 
\begin{align}
    U_{\rA\rA_1}\left[{\M}_{a_1|x_1}\tp \openone^{\rA_1}\right]U_{\rA\rA_1}^\dagger = \openone^{\rA}\tp{\M'}_{a_1|x_1}^{\rA_1}, &\quad U_{\rB\rB_1}\left[{\N}_{b_1|y_1}\tp \openone^{\rB_1}\right]U_{\rB\rB_1}^\dagger = \openone^{\rB}\tp{\N'}_{b_1|y_1}^{\rB_1},\\
    U_{\rA\rA_1}\left[{\M}_{a_1,a_2|x_1,x_2}\tp \openone^{\rA_1}\right]U_{\rA\rA_1}^\dagger = {\K}_{a_1,a_2|x_1,x_2}^{\rA}\tp{\M'}_{a_1|x_1}^{\rA_1}, &\quad U_{\rB\rB_1}\left[{\N}_{b_1,b_2|y_1,y_2}\tp \openone^{\rB_1}\right]U_{\rB\rB_1}^\dagger = {\Ll}_{b_1,b_2|y_1,y_2}^{\rB}\tp{\N'}_{b_1|y_1}^{\rB_1}
\end{align}
where 
\begin{equation}
    {\M}_{a_1,a_2|x_1,x_2} = \frac{1}{m_3m_4\cdots m_n}\sum_{\substack{a_3,a_4,\cdots,a_n\\ x_3,x_4,\cdots, x_n}}\M_{\aA|\xX} \qquad  {\N}_{b_1,b_2|y_1,y_2} = \frac{1}{m_3m_4\cdots m_n}\sum_{\substack{b_3,b_4,\cdots,b_n\\y_3,y_3,\cdots, y_n}}\N_{\bB|\yY}
\end{equation}
and ${\K}_{a_1,a_2|x_1,x_2}$ and ${\Ll}_{b_1,b_2|y_1,y_2}$ are positive operators satisfying 
\begin{equation}
    \sum_{a_2}{\K}_{a_1,a_2|x_1,x_2}=\openone, \quad \sum_{b_2}{\Ll}_{b_1,b_2|y_1,y_2}=\openone, \qquad \textrm{for} \; \textrm{all} \quad x_2,y_2,x_1,y_1,a_1,b_1
\end{equation}
In the second step the maximal violation of the inequality $\mathcal{J}^2$ can be modelled as
\begin{align}
\mathcal{J}^2 &= \frac{1}{m_1^2}\sum_{\substack{a_1,a_2,b_1,b_2\\x_1,x_2, y_1,y_2}}b_{2,a_2b_2}^{x_2y_2}\bra{\psi}{\M}_{a_1,a_2|x_1,x_2}\tp{\N}_{b_1,b_2|y_1,y_2}\ket{\psi}\\
&= \frac{1}{m_1^2}\sum_{\substack{a_1,a_2,b_1,b_2\\x_1,x_2, y_1,y_2}}b_{2,a_2b_2}^{x_2y_2}\bra{\xi_1}{\K}_{a_1,a_2|x_1,x_2}\tp{\Ll}_{b_1,b_2|y_1,y_2}\ket{\xi_1}\bra{\psi'_1}{\M'}_{a_1|x_1}\tp{\N'}_{b_1|y_1}\ket{\psi'_1}\\
&= \frac{1}{m_1^2}\sum_{a_{1},b_{1},x_{1},y_{1}}\bra{\psi'_1}{\M'}_{a_1|x_1}\tp{\N'}_{b_1|y_1}\ket{\psi'_1}\sum_{a_2,b_2,x_2,y_2}b_{2,a_2b_2}^{x_2y_2}\bra{\xi_1}{\K}_{a_1,a_2|x_1,x_2}\tp{\Ll}_{b_1,b_2|y_1,y_2}\ket{\xi_1}\\
&=\sum_{a_{1},b_{1},x_{1},y_{1}}\frac{p_1(a_1,b_1|x_1,y_1)}{m_1^2}\sum_{a_2,b_2,x_2,y_2}b_{2,a_2b_2}^{x_2y_2}\bra{\xi_1}{\K}_{a_1,a_2|x_1,x_2}\tp{\Ll}_{b_1,b_2|y_1,y_2}\ket{\xi_1}
\end{align}
Since numbers $p_1(a_1b_1|x_1y_1)/m_1^2$ are positive and sum to one, and all ${\K}_{a_1,a_2|x_1,x_2}$ and ${\Ll}_{b_1,b_2|y_1,y_2}$ are valid measurement operators the observation $\mathcal{J}^{2} =\beta_{2}$ implies that
\begin{equation}
\sum_{a_2,b_2,x_2,y_2}b_{2,a_2b_2}^{x_2y_2}\bra{\xi_1}{\K}_{a_1,a_2|x_1,x_2}\tp{\Ll}_{b_1,b_2|y_1,y_2}\ket{\xi_1}  = \beta_{2}
\end{equation}
for all $a_1,b_1,x_1,y_1$. Let us define operators
\begin{align}
    \K_{a_2|x_2}^{(2)} = \frac{1}{m_1}\sum_{a_1,x_1}{\K}_{a_1,a_2|x_1,x_2}, \qquad {\Ll}_{b_2|y_2}^{(2)} = \frac{1}{m_1}\sum_{b_1,y_1}{\Ll}_{b_1,b_2|y_1,y_2}
\end{align}
These operators are valid measurement operators, and they satisfy
\begin{equation}
\sum_{a_2,b_2,x_2,y_2}b_{2,a_2b_2}^{x_2y_2}\bra{\xi_1}{\K}_{a_2|x_2}^{(2)}\tp{\Ll}_{b_2|y_2}^{(2)}\ket{\xi_1}  = \beta_{2}.
\end{equation}
The maximal violation $\mathcal{J}^2$ implies that there exists a local unitary $U_2 = U_{\rA\rA_2}\tp U_{\rB\rB_2}\tp\idd_{\rP}$ such that
\begin{align}\label{psi2}
U_2\left[\ket{\xi_1}^{\rA\rB}\tp\ket{00}^{\rA_2\rB_2}\right] &= \ket{\xi_2}^{\rA\rB}\tp \ket{\psi'_2}^{\rA_2\rB_2}\\
U_2\left[({\K}_{a_2|x_2}^{(2)}\tp{\Ll}_{b_2|y_2}^{(2)}\ket{\xi_1}^{\rA\rB})\tp\ket{00}^{\rA_2\rB_2}\right] &= \ket{\xi_2}^{\rA\rB}\tp (\M'_{a_2|x_2}\tp\N'_{b_2|y_2}\ket{\psi'_2}^{\rA_2\rB_2}).
\end{align}
Combining \eqref{psi1} and \eqref{psi2} we get a self-testing statement for a tensor product of two different states
\begin{equation}
    U_2\circ U_1\left[\ket{\psi}^{\ab}\tp\ket{0000}^{\rA_1\rA_2\rB_1\rB_2}\right] = \ket{\xi_2}^{\ab}\tp\ket{\psi'_1}^{\rA_1\rB_1}\tp\ket{\psi'_2}^{\rA_2\rB_2}.
\end{equation}
The process can be further repeated for $i = 3$ to $i=n$, reaching the final statement
\begin{equation}
     U_n\circ \cdots \circ U_1\left[\ket{\psi}^{\ab}\tp\ket{0000}^{\rA_1\cdots \rA_n\rB_1\cdots\rB_n}\right] = \ket{\xi_n}^{\ab}\tp\ket{\psi'_1}^{\rA_1\rB_1}\tp\cdots \tp\ket{\psi'_n}^{\rA_n\rB_n}.
\end{equation}

\end{appendix}

\end{document}